\theoremstyle{definition}
\newtheorem*{def*}{Definition}
\newtheorem{dfn}{Definition}
\newtheorem{theorem}{Theorem}[section]
\theoremstyle{definition}
\newtheorem{lemma}{Lemma}
\theoremstyle{definition}
\newtheorem*{prf*}{Proof}
\theoremstyle{definition}
\newtheorem*{prfthm*}{Proof of Theorem}
\theoremstyle{theorem}
\newtheorem{claim}[theorem]{Claim}
\newcommand{\Act}{\emph{Active}\xspace}
\newcommand{\efbad}{\ensuremath{\epsilon_0}}
\newcommand{\light}{\emph{light}\xspace}
\newcommand{\fbad}{\ensuremath{1/4}}
\newcommand{\maxact}{\ensuremath{\textit{max}}_a\xspace}
\newcommand{\minact}{\ensuremath{\textit{min}}_a\xspace}
\newcommand{\prom}{\emph{Promise Agreement}\xspace}
\newcommand{\promAlg}{\emph{\sc PromiseAgreement}\xspace}
\newcommand{\whp}{w.h.p.\xspace}
\newcommand{\act}{\emph{active}\xspace}
\def \polylog{\operatorname{polylog}}
\def\eps{{\epsilon}}
\newcommand{\rinput}{\ensuremath{\textit{ready-in}}\xspace}
\newcommand{\rout}{\ensuremath{\textit{ready-out}}\xspace}
\newcommand{\vall}{\ensuremath{\textit{value}}\xspace}
\newcommand{\A}{\mathcal{A}}
\newcommand{\core}{\ensuremath{\textit{CORE}}\xspace}
\newcommand{\lcba}{\emph{\sc LargeCoreBA}\xspace}
\newcommand{\ba}{\emph{\sc KT0-ByzantineAgreement}\xspace}
\newcommand{\IA}{\emph{Implicit Agreement}\xspace}
\newcommand{\old}[1]{}{}
\newcommand{\shortOnly}[1]{\ifthenelse{\boolean{short}}{#1}{}}
\newcommand{\onlyShort}[1]{\ifthenelse{\boolean{short}}{#1}{}}
\newcommand{\longOnly}[1]{\ifthenelse{\boolean{short}}{}{#1}}
\newcommand{\onlyLong}[1]{\ifthenelse{\boolean{short}}{}{#1}}
\newcommand{\I}{\mathcal{I}}
\title{Scalable and Secure Computation Among Strangers: Resource-Competitive Byzantine Protocols}
\author{John Augustine \thanks{Department of Computer Science and Engineering, Indian Institute of Technology at Madras, Chennai, Tamil Nadu, 600036, India. Email: {\tt augustine@iitm.ac.in}. Research supported in part by an Extra-Mural Research Grant (file number EMR/2016/003016) and a MATRICS grant (file number MTR/2018/001198), both funded by the Science and Engineering Research Board, Department of Science and Technology, Government of India and by the VAJRA faculty program of the Government of India.} \hspace{-0.5cm}
\and Valerie King \thanks{Dept. of Computer Science, University of Victoria, Vancouver BC, Canada. Email: {\tt val@uvic.edu}.}
\hspace{-0.5cm}
\and Anisur R. Molla \thanks{Computer and Communication Sciences, Indian Statistical Institute, Kolkata, India. Email: {\tt molla@isical.ac.in}. Research supported by DST Inspire Faculty research grant DST/INSPIRE/04/2015/002801.} \hspace{-0.5cm}
 \and Gopal Pandurangan \thanks{Department of Computer Science, University of Houston, Houston, TX 77204, USA. Email: {\tt gopalpandurangan@gmail.com}. Research supported, in part, by NSF grants CCF-1527867, CCF-1540512,  IIS-1633720,  CCF-BSF-1717075, BSF award 2016419, and by the VAJRA faculty program of the Government of India.} \hspace{-0.5cm}
\and Jared Saia \thanks{Dept. of Computer Science, University of New Mexico, NM, USA. Email: {\tt saia@cs.unm.edu}.
This work is supported by the National Science Foundation grants CNS-1318880 and CCF-1320994.}
}
\begin{document}

\date{}

\maketitle
\thispagestyle{empty}
\begin{abstract}

The last decade has seen substantial progress on designing Byzantine agreement algorithms which are scalable in that they do not require all-to-all communication among nodes. These protocols require each node to play a particular role determined by its ID,  and to send to specific neighbors.  Motivated, in part,  by the rise of permissionless systems such as Bitcoin where arbitrary nodes (whose identities are not known apriori) can join and leave  at will, we extend this research to a more practical model  where each node (initially) does not know the identity of its neighbors.  In particular, a node can send to new destinations only by sending to random (or arbitrary) nodes, or responding (if it chooses) to messages received from those destinations.   We assume a synchronous and fully-connected network, with a full-information, but static Byzantine adversary. A general drawback of existing Byzantine protocols is that the communication cost incurred by the honest nodes may not be proportional  to those incurred by the Byzantine nodes; in fact, they can be significantly higher. Our goal
is to design Byzantine protocols for fundamental problems which are {\em resource competitive}, i.e., the total number of bits sent by all the honest nodes
is not significantly more than those sent by the Byzantine nodes.   

We describe a randomized scalable algorithm to solve Byzantine agreement, leader election, and committee election in this model.  Our algorithm sends an expected $O((T+n)\log n)$ bits and has latency $O(\polylog(n))$, where $n$ is the number of nodes, and $T$ is the minimum of $n^2$ and the number of bits sent by adversarially controlled nodes. The algorithm is resilient to $(\fbad-\epsilon)n$ Byzantine nodes for any fixed $\epsilon > 0$, and succeeds with high probability\footnote{I.e., with probability at least $(1-1/n^c)$ for some constant $c > 0$.}.  Our work can be considered as a first application of resource-competitive analysis to fundamental Byzantine problems.

To complement our algorithm we also show  lower bounds for resource-competitive Byzantine agreement. We prove that, in general, one cannot hope to design Byzantine protocols that have communication
cost that is significantly smaller than the cost of the Byzantine adversary.
 \end{abstract}

\vspace{0.4cm}
{\bf Keywords:} Byzantine protocol, Byzantine agreement, Leader election, Committee election, Resource-competitive protocol, Randomized protocol

\newpage

\setcounter{page}{1}


\section{Introduction}\label{sec:intro}

What happens when you don't know your neighbors?  Anonymity is critical in many modern networks including cryptocurrency~\cite{bitcoin,ethereum}, anonymous communication~\cite{freenet,tor}, and wireless~\cite{kong2004anonymous,li2009privacy,weber2010internet,sicari2015security}.  In anonymous networks, nodes are generally known only by self-generated identifiers\footnote{Such as the public key for a digital signature.}; and communication primitives may be limited to: sending a message to all nodes, sending a message to a random (or arbitrary) node, and responding to a message sent directly.  Unfortunately, all algorithms to coordinate such networks in the presence of malicious faults seem either to require all-to-all communication, or make cryptographic assumptions.

The open nature of permissionless systems such as Bitcoin  allow many nodes to enter the network with little or no admission control.
A major challenge in such systems  is dealing with malicious (also called {\em Byzantine}) nodes, which can try to foil the protocols executed by honest (good) nodes.
Byzantine-resistant protocols are at the heart of  secure and robust networks that can tolerate
malicious nodes, such as P2P networks. Consider the real-world example of Bitcoin --- a decentralized P2P-based digital currency~\cite{bitcoin}. A crucial aspect of Bitcoin is a computational mechanism that allows fault-tolerant agreement on a set of ordered transactions. Agreement in Bitcoin is achieved via a computationally-expensive operation, called {\em mining}.

The problem of achieving agreement under Byzantine faults, \emph{Byzantine agreement}, is a fundamental and long-studied problem in distributed computing~\cite{PSL80,AW,LynchBook}.  In the Byzantine agreement problem,
all good nodes start with an input bit, and we must ensure two conditions: (1) {\em All} good nodes output the same input bit ({\em consensus condition}) and (2) the common bit should be the input bit of some good node ({\em validity condition}).  This must be done despite the presence of a constant fraction of Byzantine nodes that can deviate arbitrarily from the protocol executed by the good nodes.  Byzantine agreement is a ``keystone" problem in  distributed computing, in that
 it provides a critical building block for creating attack-resistant distributed systems. Its importance can be seen from widespread and continued application in many domains: sensor networks \cite{SP04},
grid computing \cite{AK02}, peer-to-peer networks \cite{REGWZK03} and cloud computing \cite{Wright09}. However, despite intensive research, there has still not been a practical solution to the Byzantine agreement problem for large networks. A main reason for this is the {\em large  
message complexity} of currently known protocols, as has been suggested by many systems papers \cite{AF03,ADK06,castro2002practical,MR97,YHET05}. The best known protocols have quadratic message complexity, i.e., $\Theta(n^2)$, where $n$ is the number of nodes in the network.

King and Saia \cite{KS10} described the first Byzantine agreement algorithm in {\em synchronous complete networks} that {\em breaks} the quadratic message barrier {\em under the assumption that nodes a priori know the identities of all their neighbors}. This  assumption  is called
the $KT_1$ model \cite{peleg}, where it is assumed that each node has knowledge of the 
identities of its neighbors\footnote{Just the identities of the neighbors, not any other information such as the internal states of the neighbors is assumed.} a priori. This is in contrast to the $KT_0$ model \cite{peleg}, another standard model where nodes do not know the identity of the neighbors.  In the $KT_1$ model, \cite{KS10}  presented an algorithm where {\em each}  processor sends only $\tilde{O}(\sqrt{n})$ messages, and thus the total message complexity is bounded by $\tilde{O}(n^{1.5})$. This was later improved by Braud-Santoni et al. \cite{braud2013fast} to $O(n \text{ polylog}(n))$ total
message complexity, however, this protocol might require some
node to send $O(n)$ messages. 

The $KT_0$ model seems more applicable to modern, permissionless networks.  While we can convert algorithms for $KT_1$ to $KT_0$ by including an initial step where each node communicates with all its neighbors to obtain their identities, this incurs a $\Theta(n^2)$ message cost.  Hence a fundamental question is:  {\bf Can we design Byzantine protocols that require sub-quadratic messages in the $KT_0$ setting?}

In this paper, we take a step toward addressing the above question.  Our focus is on the fundamental problems of Byzantine agreement, leader election and committee election.  Our main result is an algorithm to solve these problems  while sending a number of bits that is $O((T+n) \log n)$, where $T$ is the minimum of $n^2$ and {\em the number of bits sent by adversarially controlled (Byzantine) nodes}, and $n$ is the network size.  This kind of result where algorithmic cost is measured with respect to adversarial cost, is an example of \emph{resource-competitive analysis}~\cite{Bender:2015:RA:2818936.2818949,gilbert:resource}.  To the best of our knowledge, our result is the first of its kind that introduces resource-competitive analysis to the study of Byzantine agreement and related problems. In particular, our result shows that  Byzantine protocols can be designed
that compete well with the resources (messages) expended by
the Byzantine nodes; if they send less messages then the protocol also sends less. An alternate way
to interpret our result is that Byzantine nodes have to incur significant message complexity  (up to quadratic in $n$)
in order to make the honest nodes to have large message complexity. We note that prior work on Byzantine protocols
all incurred quadratic message complexity (in the $KT_0$ setting) regardless of the behavior of the Byzantine nodes.
Our protocol is efficient, lightweight, and fast (has low
latency) and can be used as a building block for designing
secure and scalable systems.

\medskip
\noindent
\subsection{Model}
\label{sec:model}
We consider a network of $n$ nodes: $t$ are \emph{bad} and controlled by the adversary, and the remainder are \emph{good} and follow our algorithm; we assume $t \leq (\fbad - \efbad)n$ for some constant $\efbad>0$. 

We consider a synchronous, fully-connected network in the $KT_0$ model~\cite{dnabook,peleg}.  In particular, we assume that a node has ports to every other node in the network, but learns the identity of each node reachable through a port only by receiving a communication from that node. Thus a node sends to a new destination only by  selecting a port, or by responding to messages received.  The $n$ nodes are assumed to have distinct ID's which lie in $[1,n^k]$ for $k$ is a (large) constant.\footnote{This means that an ID can be represented using $O(\log n)$ bits, which can be sent in a message. We  assume the CONGEST model, i.e., only $O(\log n)$-sized messages are used in our algorithm.}  Our adversary is full-information in that it knows the states of all nodes at any time, is assumed to be computationally unbounded, and is also \emph{rushing} in the sense that it can read messages sent by good nodes before sending out its own messages. However, the adversary is \emph{static}, so that it must decide which nodes are bad prior to the start of the algorithm. We assume that Byzantine nodes cannot fake their
own identities, however they can forward fake messages on behalf of other nodes. 

\subsection{Our Contributions}
\label{sec:results}

We solve three classic problems in this model.  In Byzantine agreement, all good nodes must output the same bit, which is the input bit of some good node. In leader election, all good nodes must agree on a leader, and this leader must be good with constant probability. In committee election, all nodes must agree on a subset of $O(\log n)$ nodes where the fraction of bad nodes in the subset is within a small $\epsilon$ fraction of the overall fraction of bad nodes.

  Our main result is as follows.

\begin{theorem}\label{t:main}
	There exists a randomized algorithm that solves Byzantine agreement, leader election and committee election in the above model.  This algorithm sends an expected $O((T + n) \log n)$ messages, and has latency $O(\polylog(n))$, where $T$ is the minimum of $n^2$ and the number of bits sent by the bad nodes. It is resilient to $t \leq (\fbad-\efbad)n$ Byzantine faults for any fixed $\efbad>0$, and succeeds with probability $1-1/n^c$ for any constant $c$.  
\end{theorem}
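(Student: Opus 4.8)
The plan is to build the algorithm in layers, starting from a primitive that lets a small set of ``ready'' nodes reach agreement cheaply, and then bootstrapping up to full Byzantine agreement while carefully accounting for the message budget against the adversary's expenditure $T$. The central difficulty is that in the $\text{KT}_0$ model a good node cannot address a specific good node; it can only poll random ports or reply to whoever contacted it. So the first step is a \emph{polling / sampling} subroutine: each participating good node selects $\Theta(\log n)$ random ports and also answers incoming requests. Because there are at most $(1/4-\efbad)n$ bad nodes, a random port hits a good node with probability at least $3/4+\efbad$; a Chernoff bound then shows that \whp{} every good node's sample has a strict good majority, and moreover the union of all good samples hits (\whp) a large constant fraction of the good nodes. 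This gives an expander-like communication structure of total size $O(n\log n)$ among the good nodes, independent of adversary behavior — this is the ``free'' $O(n\log n)$ term in the bound. The $T$-dependence enters because every bad message a good node receives must be processed and possibly forwarded, so honest work is at most a constant times $T$ plus the baseline $n\log n$; capping at $n^2$ is automatic since a good node never sends more than $O(n)$ messages total (one per port).

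Next I would use this sampling structure to implement \emph{committee election} and \emph{leader election}: run a lightweight, locally-checkable lottery (each node's ID together with fresh random bits defines a value; the $O(\log n)$ nodes with smallest values form the committee) and have nodes agree on the committee via the polling graph, using the earlier \promAlg{} / \aete{} primitive to reconcile disagreements. The key invariant is that the elected committee has bad fraction within $\epsilon$ of $t/n$ \whp, which follows from the random lottery plus the fact that bad nodes cannot bias their own values (they cannot fake IDs, only forward fake messages, which the validity checks reject). With a good committee in hand (good with constant probability, and we can run $\Theta(\log n)$ independent committees and take a majority-of-committees to boost this, or just keep the constant as the theorem allows), the committee runs a classical quadratic-message Byzantine agreement \emph{among themselves} — this costs only $O(\polylog n)$ messages since the committee has size $O(\log n)$ — and then disseminates the decided bit back through the polling graph. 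Dissemination is the validity-sensitive part: a good node accepts a value only if it hears it with appropriate multiplicity from its sample, preventing bad nodes from injecting a spurious bit without paying $\Omega(n)$ messages, which would already be charged to $T$.

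The latency bound is the easiest part: the sampling step is one round, committee/leader agreement is $O(\polylog n)$ rounds for the inner BA, and dissemination is $O(\log n)$ rounds of gossip over the (constant-degree, hence $O(\log n)$-diameter) polling graph, so the whole thing is $O(\polylog n)$. For correctness I would assemble: (i) the sampling graph restricted to good nodes is a good-expander \whp; (ii) the committee is $\epsilon$-representative and good-with-constant-probability; (iii) conditioned on a good committee, the inner BA satisfies consensus and validity; (iv) dissemination preserves consensus and validity because the multiplicity thresholds are set between the good-majority and bad-minority regimes. A subtle point to handle carefully is the \emph{rushing, full-information} adversary: since random port choices are made and revealed round by round, I need the samples to be committed before the adversary acts within that round, or equivalently to argue that the adversary's foreknowledge of the topology doesn't help because the lottery randomness is drawn afresh and the expansion properties are worst-case over adversary placement.

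The step I expect to be the main obstacle is making the $T$-competitiveness tight and unconditional: I must ensure that \emph{no} good node can be coerced into sending $\omega(1)$ messages per bad message received, across \emph{all} subroutines simultaneously — in particular that the committee-agreement and dissemination phases cannot be triggered or amplified by a small number of bad messages into a large honest cost. This requires a global potential/charging argument: every honest message is charged either to the $O(n\log n)$ baseline (a port the node would poll regardless) or to a distinct bad message it is responding to, with only constant overcounting. Getting the thresholds in the dissemination and reconciliation phases right so that this charging goes through, while still preserving validity against a bad node that sends exactly below the detection threshold from many ports, is the delicate balance at the heart of the proof.
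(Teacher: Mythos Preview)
Your proposal has a genuine architectural gap: it is a \emph{one-shot} protocol, whereas the resource-competitive guarantee in the theorem fundamentally requires an \emph{adaptive, escalating} structure. You account for the $T$ term by saying ``every bad message a good node receives must be processed and possibly forwarded, so honest work is at most a constant times $T$ plus the baseline $n\log n$.'' But this is an argument about \emph{cost}, not about \emph{correctness}. The theorem requires both simultaneously: the algorithm must succeed \whp\ \emph{and} its cost must be $O((T+n)\log n)$. In a one-shot design, either (a) the adversary can break the protocol by spending some fixed budget $T_0$ --- and then you have no recovery mechanism, so correctness fails --- or (b) the protocol always succeeds regardless of adversarial effort, in which case its cost is some fixed function of $n$ and $T$ should not appear in the bound at all. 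Your charging argument never explains why an adversary spending $T$ cannot simply prevent agreement; it only bounds how many honest messages are sent while the protocol (possibly) fails.

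The paper's mechanism is quite different. It runs in epochs with a parameter $p$ that starts at $\Theta((\log n)/n)$ and doubles each epoch. In each epoch, $\Theta(pn)$ randomly self-selected \emph{active} nodes broadcast to \emph{all} $n$ nodes (cost $\Theta(pn^2)$), then use a validation step and a modified version of the algorithm from~\cite{king2006scalable} (\lcba) to agree among themselves; a \prom subroutine then either propagates the decision everywhere or certifies that no one decided, in which case $p$ doubles. The key lemma (Lemma~\ref{l:advcost}) is that the adversary must spend $\Omega(pn^2)$ messages in an epoch to prevent termination in that epoch. Summing the geometric series, honest cost up through the terminating epoch is $O(pn^2)$ for the final $p$, which is $O(T)$ by the lemma. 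This ``pay-to-delay'' coupling between adversarial and honest cost is the heart of the result, and it is absent from your sketch.

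A second, more concrete issue: in the $KT_0$ model your $O(\log n)$-size committee cannot ``run a classical quadratic-message Byzantine agreement among themselves,'' because committee members do not know one another's ports. Learning the committee's identities via your polling graph does not give you ports; you would need either a broadcast from each committee member to all $n$ nodes (which is exactly the paper's $pn^2$-cost step, with $p\approx (\log n)/n$) or some routing through the polling graph that the rushing, full-information adversary can flood. The paper handles this by having active nodes broadcast, then carefully filtering the resulting views with a two-stage validation (Steps~\ref{s:sample-intersection}(a)--(c)) so that \lcba's preconditions hold unless the adversary has already spent $\Omega(pn^2)$.
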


We note that our $O(\polylog(n))$ latency bound holds even in the CONGEST model, where each message is $O(\log n)$ bits. The algorithm \ba described in Section~\ref{s:ba} achieves the result in Theorem~\ref{t:main}, and the proof of this theorem is in Section~\ref{s:main-analysis}.

To complement the above result we also show lower bounds for resource-competitive Byzantine agreement (see  Section \ref{sec:lower-bound}).
 We prove that, in general, one cannot hope to design Byzantine protocols that have communication
cost that is significantly smaller than the cost of the Byzantine adversary, i.e., the no. of messages send by bad nodes. We first show a lower bound for {\em deterministic} BA protocols which is essentially tight with respect to the upper bound of our resource-competitive randomized algorithm (see  Section \ref{sec:det-lb}). We show that if $T = O(n^2)$ is the budget on the message bits of the Byzantine nodes,  then
for any deterministic  protocol, the total number of messages sent by the good nodes is $\Omega(T)$ (see  Theorem \ref{thm:deterministic-lb}).
The deterministic  lower bound holds even in the $KT_1$ model. We then show a somewhat weaker lower bound on the resource competitiveness of randomized BA protocols (see  Section \ref{sec:rand-lb}). The argument for the randomized case is more involved compared to the deterministic case, as the algorithm's (future) random choices are unknown to the Byzantine adversary. We show that if  $T = n^{1+\alpha}$  for some $\alpha \in (0,1]$ is the budget of the Byzantine nodes, then for  any (randomized) BA algorithm  in the  $KT_0$ setting, the total expected number of   messages sent by good nodes, is at least $\Omega(n^{1+\frac{\alpha}{2}})$ (see  Theorem \ref{thm:rand-lb}). Another significance of this  lower bound result
is that it separates  the message complexity of Byzantine agreement
between $KT_0$ and $KT_1$ models in the randomized setting.

\subsection{Techniques and other results}

We focus first on Byzantine agreement, our solutions to leader and committee election use similar techniques.  Our algorithm depends on solutions to two new problems: \IA and \prom.  In the \IA problem, success means that strictly greater than a $t/n$ fraction of good nodes decide on the same (correct) bit and the remaining good nodes do not decide; and failure means that no good nodes decide.  Next, the \prom problem assumes there has first been either success or failure in \IA.  In the case of success, \prom ensures all nodes  decide on the same value and terminate; in the case of failure, no nodes decide.  

\ba runs in epochs.  In each epoch, we (1) run an algorithm for \IA; (2) run an algorithm for \prom; and (3) terminate in the case of success, or increase computational effort in the case of failure.  

The computational effort for \IA is tuned by increasing the number of \emph{active} nodes.  In particular, during a run of \IA, the active nodes first attempt to solve Byzantine agreement among themselves, and then to communicate the output to all other nodes in the network.  Our \IA algorithm ensures that, unless the bad nodes send a number of messages that is $n$ times the number of active nodes, then \IA will succeed.  Next, we solve \prom.  This ensures that if \IA succeeded, then all nodes will decide on the same value and terminate; and if \IA failed, then no nodes decide.  In the latter case, all nodes proceed to the next epoch, where the number of active nodes doubles in expectation.

\medskip
\noindent
{\bf LargeCoreBA.}  There are several technical challenges in the implementation of this main idea.  The first is to enable Byzantine agreement among the active nodes when the bad nodes do not send out too many messages.   We say that a node $x$ has a view of node $y$ if $x$ knows $y$'s ID and the port to $y$.  With a fair amount of technical work, we show that it is possible to modify an algorithm by King et al.~\cite{king2006scalable} to ensure agreement even among nodes whose views only ``mostly" overlap, provided that the range of all IDs is only polynomially large.  We call this modified algorithm  \lcba, and summarize its properties in Lemma~\ref{l:lcba} below; we believe the result may be of independent interest.  The technical approach we used to prove Lemma~\ref{l:lcba} is (1) a counting argument to show there are not too many bad nodes in the views of key participants; and (2) the use of a sampler graph to ensure that these bad participants are well spread over the committees, as used in~\cite{king2006scalable} (see Section~\ref{s:lcba}).   

\begin{lemma} \label{l:lcba} Let  $G$ be a set of good nodes which wish to come to agreement.  For each $x \in G$, let $S_x$ be the set of nodes in the view of $x$.   Let $B$ be the set of bad nodes in $\bigcup_{x \in G} S_x $. Assume $G \subseteq  \bigcap_x S_x$; $|B| \leq (1-\epsilon) |G|/2 $ for some fixed constant $\epsilon>0$; and all nodes have distinct ID's in $[1,n^k]$.  Then there is an algorithm \lcba which computes almost everywhere agreement with high probability among $(1-1/\log n)$ fraction of nodes in $G$ in  time and communication per node which is polylogarithmic in $|G|+|B|$.
In one more round, if each good node broadcasts to all other nodes, and then each node takes the majority, all nodes will come to agreement using $|G|(|G| +|B|)$ total messages, and latency polylogarithmic in $|G| + |B|$.
 \end{lemma}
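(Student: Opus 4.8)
The plan is to run, on the aggregate node set $N := G \cup B$ with $m := |G|+|B|$ participants, the scalable almost-everywhere Byzantine agreement protocol of King et al.~\cite{king2006scalable} --- which, for a fixed, commonly-known set of $m$ nodes with fewer than a $(1/3-\Omega(1))$ fraction of (statically chosen) Byzantine nodes, brings a $1-o(1)$ fraction of the good nodes to agreement in $\polylog m$ rounds using $\polylog m$ bits per node --- and then to argue that it still behaves correctly even though no good node sees all of $N$. Two things must be checked. First, that $N$ lies in the required fault regime: from $|B|\le(1-\epsilon)|G|/2$ one gets $|B|/m \le \frac{(1-\epsilon)/2}{1+(1-\epsilon)/2} = 1/3-\delta$ with $\delta=\Theta(\epsilon)$, so $N$ has a bad fraction bounded away from $1/3$, which is exactly what \cite{king2006scalable} needs (rescale $\epsilon$ if that paper phrases its threshold differently). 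Second --- the actual content --- correctness under heterogeneous views.

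Each good node $x$ executes the protocol using only its local view $G\cup B_x$, where $B_x:=S_x\cap B$. The engine of \cite{king2006scalable} is a publicly known sampler (bipartite expander) graph that assigns each ID to $\polylog$ many committees, each of $\polylog$ size, so that every committee has a bad fraction within $o(1)$ of the global fraction; since the bad set is fixed in advance (the adversary is static), this guarantee applies here. The crucial point is that committee membership is a function of the ID alone, so every good node computes the same committees for every node it sees, and --- because $G\subseteq\bigcap_x S_x$ --- all good nodes agree \emph{exactly} on the good members of every committee, the only discrepancy between their executions being which bad nodes each perceives. Correctness then follows because the protocol uses only two primitives: (a) polling a committee and taking a plurality, and (b) communicating along sampler edges. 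In (a), the good members of a committee $C$ all lie in $G$, are visible to every good node, and (by the protocol's invariants) hold a common value $v$; since $C$'s bad fraction is below $1/2$, they form a strict majority of $C$, so every good observer sees $v$ as the plurality, regardless of which bad members of $C$ it perceives (an unseen bad node is indistinguishable from a silent one). In (b), good-to-good sampler edges work because $G$ is in everyone's view; a bad node that wants to influence a good node simply sends to it, entering its view; a bad node that stays silent only hurts itself. This is where the \emph{counting} argument enters: $|B|\le(1-\epsilon)|G|/2$ together with the sampler guarantee forces $B$ to be spread evenly across committees, so that none of the ``key participants'' --- the committees through which the recursion funnels the decision, and the sub-leaders elected along the way --- ever sees more than a $(1/3-\Omega(1))$ fraction of bad nodes. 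Hence the $(1-1/\log n)$-almost-everywhere agreement, the $\polylog(|G|+|B|)$ latency, and the $\polylog(|G|+|B|)$ per-node communication all carry over from \cite{king2006scalable}. The hypothesis that IDs lie in $[1,n^k]$ is used here: it lets committee labels and sampler edges be named with $O(\log n)$-bit messages (so the protocol stays in CONGEST), and lets a union bound over the $n^{O(1)}$ committees keep the failure probability at $1/n^c$; the number of committees is fixed as a function of a common $\Theta(|G|+|B|)$ estimate of the participant count (which, in the \ba application, is the current epoch's target number of active nodes, known to all).

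For the last sentence of the lemma: after the above, a subset $G'\subseteq G$ with $|G'|\ge(1-1/\log n)|G|$ holds the common value $v$, while the remaining good nodes hold arbitrary values. In one more round, each good node sends its value to every other node (for a good sender this is at most the $|G|+|B|$ nodes of $N$, since its view contains all of $G$ and at most all of $B$), and each node outputs the majority of the values it received. At every good recipient, $v$ occurs with multiplicity $\ge|G'|\ge|G|(1-1/\log n)$, whereas any other value occurs with multiplicity at most $|G\setminus G'|+|B|\le |G|/\log n+(1-\epsilon)|G|/2$; for $|G|$ large enough that $1-2/\log n>(1-\epsilon)/2$, the former strictly dominates, so every good node outputs $v$. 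This round uses at most $|G|(|G|+|B|)$ messages --- which dominates the $\polylog$-per-node cost above --- and adds one round to the latency.

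The main obstacle is the second step: showing that the sampler/committee skeleton of \cite{king2006scalable}, designed for a single shared view, survives heterogeneous views. What makes it work is that committee identity depends only on the ID, so different good nodes' executions differ solely in which (possibly silent) bad nodes they perceive, while the protocol's correctness rests on good supermajorities inside committees --- a property insensitive to precisely that discrepancy. A secondary wrinkle --- getting all good nodes to agree on the number of committees despite their differing view sizes --- is handled by supplying the globally known target participant count, and the polynomial ID bound keeps the whole construction inside CONGEST and inside the high-probability regime.
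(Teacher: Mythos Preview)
Your high-level outline matches the paper's: adapt the committee-based almost-everywhere protocol of~\cite{king2006scalable}, argue it tolerates the fact that good nodes have different views of the bad set, and finish with a broadcast-plus-majority round. Your treatment of heterogeneous views---committee membership depends on ID only, all good nodes see all of $G$, and an unseen bad node is indistinguishable from a silent one---is essentially what the paper says in one sentence, so that part is fine. The broadcast-plus-majority arithmetic is also correct.

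The genuine gap is the sampler itself. You write that ``the engine of~\cite{king2006scalable} is a publicly known sampler (bipartite expander) graph that assigns each ID to $\polylog$ many committees, each of $\polylog$ size,'' and then say the $[1,n^k]$ assumption is used merely so that names fit in $O(\log n)$ bits and so that a union bound over $n^{O(1)}$ committees goes through. This is where the argument breaks. The sampler in~\cite{king2006scalable} is built for a name space of size equal to the number of participants; here the name space is $[1,n^k]$ while only $s=\Theta(|G|+|B|)$ IDs are actually present, and nobody knows in advance \emph{which} $s$ they are. If you use the original sampler on $[1,n^k]$ you get $n^k/\polylog n$ committees, almost all empty; if you use it on $[1,s]$ you need a common renaming that the nodes do not have. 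What the paper actually proves (its Lemma~\ref{l:samplerVee}) is the existence of a bipartite graph from $L=[n^k]$ to $R=[s/\ln n]$ committees, where each committee has degree $\Theta((n^k/s)\polylog n)$, with the property that for \emph{every} choice of good and bad subsets $L_g,L_b\subseteq L$ of the prescribed sizes, all but a $1/\ln^2 n$ fraction of committees have the right composition. The union bound is not over $n^{O(1)}$ committees; it is over $\binom{n^k}{\Theta(s)}$ possible active sets, and is beaten only because the per-committee deviation probability is $e^{-\Theta(s\log n)}$. This existence lemma is the real technical content of the paper's proof, and without it the reduction to~\cite{king2006scalable} does not go through.
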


\medskip
\noindent
{\bf \emph{\IA.}}  Our solution to \IA is given in Steps~\ref{s:init}~to~\ref{s:maj} of our main algorithm in Section~\ref{s:baCode}.   There are two key technical problems that must be addressed.

First, how do we ensure that each active node $x$ maintains a set $S_x$ so that the conditions of Lemma~\ref{l:lcba} are matched? Also, in order to achieve a good competitive ratio, we need the conditions of Lemma~\ref{l:lcba} to hold unless the adversary sends $\Omega(nA)$ messages, where $A$ is the number of active nodes.  If each active node $x$ naively adds to $S_x$ all nodes $y$ that it receives an initial message from, then the adversary can add $A$ Byzantine nodes to each $S_x$ while sending only $A^2$ messages.  Thus, we must enlist the aid of non-active nodes to establish the $S_x$ sets.  Initially, each active node sends its ID to \emph{all} nodes. Call a good node \emph{light} if it has received a number of IDs approximately equal to $A$.  Then the light nodes convey information about their $S_x$ sets to the nodes in $S_x$.  They can not send out \emph{all} the IDs in $S_x$, since that would be too many bits.  Instead, they just send out a single random ID, and a node $y$ adds an ID to $S_y$ if it was received from ``enough" (i.e. $\Theta(nA)$) nodes that claim to be light.  

Unfortunately, an adversary can still cause problems by making the size of the union of the bad nodes in each $S_x$ large, so that $|B|$ is large in Lemma~\ref{l:lcba}, even when the advesary does not send out too many messages.  To solve this problem, we use a ``validation" step, whereby each active node, for each ID in $S_x$, queries $\Theta(\log n)$ random nodes about whether they have the ID in their $S_x$ sets, and filters out the ID unless enough of these queries are answered affirmatively.  Based on information obtained during this process (Step~\ref{s:init}~through Step~\ref{s:query-answering} in Section~\ref{s:baCode}), the active nodes determine if the number of light nodes is sufficient for favorable success in this epoch.

This brings us to the second problem.  How can the active nodes agree on one of two options for this epoch: (1) conditions are favorable for agreement; or (2) conditions are not favorable?  We can make use of \lcba in coming to agreement on an option.  However, this is still challenging given that, under certain conditions, some active nodes may run $\lcba$, while other active nodes may not even have a small enough $S_x$ set to run it.  To address this issue requires careful decisions about whether a node will run \lcba, what its input will be, and whether or not it will trust the output, all based on the node's estimate of the number of light nodes (See Step~\ref{s:proceed}, Section~\ref{s:baCode} for details).  In particular, nodes will sometimes run \lcba, because other nodes are relying on them to do so, even when they plan to ignore the output.  If active nodes decide conditions are favorable via the first call to \lcba (Step~\ref{s:proceed}), they will all run it again (Step~\ref{s:BA2}) to decide on a bit.  Lemma~\ref{l:coreBA} in Section~\ref{s:main-analysis} shows that no matter what the number of light nodes, these two steps ensure all active nodes come to agreement on the same decision.

Finally, in Step~\ref{s:maj}, active nodes send their decision to all other nodes.  Nodes that have small $S_x$ sets take the majority of the messages received in this step, whereas other nodes default to a decision to wait for the next epoch.  We can thereby guarantee the post-condition for \IA: either (1) a strictly greater than $t/n$ fraction of good nodes decide, or (2) no good nodes decide.  We obtain this result even when the adversary floods some good nodes but not others.

\medskip
\noindent
{\bf \emph{\prom.}}
A final technical challenge is to determine whether or not we need to run another epoch.   After solving \IA, either (1) strictly greater than a $t/n$ fraction of the good nodes have decided on the same correct bit; or (2) no good nodes have decided.  We must then ensure that \emph{all} good nodes decide either to terminate or to run another epoch.  To do this, we run an algorithm, \promAlg that solves the \prom problem (see Section~\ref{s:prom}). The solution simply has each node sample a logarithmic number of other nodes, and take a majority vote. It does not increase the overall asymptotic number of messages sent, but some non-active nodes can be forced by the adversary to respond to $O(n)$ requests.  If the outcome of \promAlg is not agreement then all nodes proceed to the next epoch, where the number of active nodes doubles in expectation.  In this way, we can guarantee that that \ba succeeds within $\log(n)$ expected epochs.


\subsection{Related Work}
{

 \paragraph{$KT_0$ and $KT_1$ Communication Model.}

$KT_0$ and $KT_1$ models are two well-studied standard models in distributed computing (e.g., see \cite{disc18,dnabook,peleg}).
It turns out that message complexity of a distributed algorithm
 depends crucially  (as explained below) on the initial knowledge of the nodes.\footnote{It is not hard to see that one can run the algorithms in the $KT_1$ model after one round of communication between neighbors, (which is all-to-all communication in a complete network).}
 For example, consider the situation where there are no Byzantine nodes. It is known that  $\Omega(n)$ expected messages are needed for  explicit leader election\footnote{For explicit leader election,
 where all nodes should know the identity of the elected leader.}  in a complete (fully-connected network) in the $KT_0$ model (see e.g., \cite{AMP18,TCS}), whereas in the $KT_1$ it takes no communication at all (since all nodes know each other's IDs, the minimum one can be selected).  Similarly it has been shown for implicit leader election and implicit agreement\footnote{In implicit leader election, only the leader node should know that it is the leader. In implicit agreement,
 it is enough if a non-empty subset of nodes agree.} that $\Omega(\sqrt{n})$ expected messages is a lower bound in the $KT_0$ model \cite{AMP18}.   It is known   that for various fundamental problems such as broadcast, spanning tree construction, minimum spanning tree construction there is a significant gap in the message complexity between the two models. For example, for all the above problems in an arbitrary graph with $m$ edges, it is known that $\Omega(m)$ is a message lower bound in the $KT_0$ model
 which holds even for randomized (Monte Carlo) algorithms. However, in the $KT_1$ model, this lower bound can be breached:
 all these problems can be solved using randomized algorithms in $\tilde{O}(n)$ messages  \cite{DBLP:journals/jacm/KuttenPP0T15}. For the complete network case, it is known that minimum spanning tree construction needs 
 $\Omega(n^2)$ messages in the $KT_0$ model, while it can be accomplished in $\tilde{O}(n)$ messages in the $KT_1$ model.
   The recent work of Gmyr and Pandurangan \cite{disc18} gives several new algorithms in the $KT_1$ model and shows many other separations between the two models.

\paragraph{Resource-Competitive Analysis.}  This paper introduces {\it resource-competitive analysis}~\cite{Bender:2015:RA:2818936.2818949,gilbert:resource} to the study of Byzantine agreement.  In resource-competitive analysis, the computational cost of the attacker, $T$, is incorporated as a parameter in performance analysis.   That is, the cost of executing an algorithm over a network of $n$ nodes is measured not only as a function of $n$, but also as a function of $T$.

Resource competitive analysis has been applied to designing algorithms for: jamming-resistant wireless communication~\cite{gilbert:near,gilbert:making,king:conflict}; attack-resistance on  multiple access channels~\cite{bender:how}, tolerating adversarial channel noise~\cite{aggarwal2016secure,daniICJournal17,ICALP15}, and efficiently distributing bridges for anonymity networks such as TOR~\cite{zamani2017torbricks}.  See~\cite{Bender:2015:RA:2818936.2818949,gilbert:resource} for detailed surveys.

Related notions comparing the resources of good and bad nodes have been considered earlier, see e.g., the work
of \cite{GMPY06}, that  shows how to achieve fairness in secure computation (either both good and bad parties obtain the result of the computation or nobody does) in the dishonest majority setting with a constant competitive ratio in terms of computational costs/number of steps.

Other related works include  \cite{BGT13} that focus on scalability of secure protocols and show how to achieve sublinear communication.

\paragraph{Byzantine Agreement and Election.}
Byzantine agreement enables participants in a distributed network to reach agreement on a decision, even in the presence of a malicious minority.  Thus, it is a fundamental building block for many applications including: cryptocurrencies~\cite{BonneauMCNKF15,eyal2016bitcoin,GiladHMVZ17,cryptoeprint:2015:521}; trustworthy computing~\cite{cachin:secure,castro1998practical, castro2002practical,1529992,clement-making,kotla2007zyzzyva,SINTRA}; peer-to-peer networks~\cite{adya:farsite,oceanweb}; and databases~\cite{GPD,preguica2008byzantium, zhao2007byzantine}. 

 In 2006, King, Saia, Sanwalani, and Vee ~\cite{king2006scalable} gave a (randomized) algorithm to solve Byzantine agreement, leader election and committee election problems in a model differing from the one in this paper only in the assumption of $KT_1$ communication.  This was the first algorithm to use only $\tilde{O}(1)$ bits of communication per node, and $\tilde{O}(1)$ time to bring almost all processors to agreement.  This result can also be achieved in a particular sparse network \cite{king2006towards}. This initial work produced agreement among all but $o(n)$ nodes.  Further work extended this result to achieve everywhere agreement, while using a number of bits that is $\tilde{O}(n^{3/2})$ (load-balanced)~\cite{king2011load}; and $\tilde{O}(n)$ (not load-balanced)~\cite{braud2013fast}.  All of these algorithms required each node to play a particular role as determined by its unique ID in $[1,n]$,  and to send to specific neighbors. In other words, these algorithms critically rely on the $KT_1$ model. These bounds hold even if the bad nodes send any number of bits.  Establishing Byzantine agreement via the use of committees is a common  approach; for examples, see~\cite{GiladHMVZ17,king2006scalable,Luu:2016}.

\medskip
\paragraph{Paper Organization.} Section~\ref{s:ba} contains \ba. Section~\ref{s:promProblem} formally defines \prom.  Section~\ref{s:main-analysis} analyzes the correctness and cost of \ba and proves Theorem~\ref{t:main}. In Section~\ref{s:lcba} we prove Lemma~\ref{l:lcba}.  In  Section~\ref{s:prom}, we describe an algorithm to solve \prom. In Section \ref{sec:lower-bound}, we prove lower bounds.  Finally, we conclude in Section~\ref{s:conc}.

\section{\ba} \label{s:ba}

Here, we describe the main algorithm for resource-competitive Byzantine agreement. It calls \lcba  and an algorithm \promAlg that solves \prom.  A node $x$ calls \lcba with a set of possible participants $S_x$, which may include nodes which do not themselves participate. 

The algorithm below runs correctly with probability $1-1/n^c$ for any constant $c$, when constant $C$ below is  chosen to be sufficiently large, depending on $c$. We let $\epsilon$ be a small constant such that $0< \epsilon< \efbad^2$.  We set $\maxact=(1+\epsilon)p(n-t)$ and $\minact=(1-\epsilon)p(n-t)$ so that w.h.p. the number of \act nodes lies in this range.

We call a good node \act if it sets its state to active in Step~\ref{s:wake}.  
We call a good node {\it light} if the number of IDs received by it from alleged \act nodes in Step \ref{s:wake} is  less than $\maxact + \epsilon pn$.  We use bounds $Low=n-2t-\epsilon n$ and $High =Low + t$ to describe the number of light and purported light nodes.  For $p > 1/C\log n$, if there are at least $Low-t $ light nodes and each sends a random ID from their list of nodes that reported being active in Step \ref{s:wake}, then w.h.p., at  least $\beta=\frac{(1- \epsilon)(Low-t)}{\maxact + \epsilon pn } $ copies of all their common IDs, in particular, the IDs of all \act nodes,  will be received by every \act node. Finally, an element in an \act node $x$'s set $S_x$ is validated when $x$ queries a random set of $C \log n$ nodes and $\delta C \log n$ nodes respond yes.  $\delta = \frac{(1-\epsilon)(Low-t)}{n}$ is chosen so that w.h.p., every ID in \act will be validated but not many ID's of nodes which are bad.

\subsection{Pseudocode for \ba} \label{s:baCode}
\begin{enumerate}

\item \textbf{Initialize:} 
 Every node $x$ sets $p \leftarrow (C \log n)/n$. Each node $x$ sets $\rout_x \leftarrow 0, \rinput_x \leftarrow 0$, and sets its state to $\lnot \act$ and $\lnot \light$. \label{s:init}
 
 \item  \textbf{Nodes become \act and notify others:} 
 With probability $p$, $x$ sets its state to $\act$ and sends its ID to all nodes.   Every node $x$ sets $S_x$ to the set of IDs  received. A node sets its state to $\light$ if $|S_x| \leq \maxact + \epsilon pn$.  \label{s:wake} 

\item \textbf{\Act nodes learn of other \act nodes:} 
\begin{enumerate}  \label{s:sample-intersection}
\item
Every \light node $x$ randomly selects an ID in $S_x$ and sends it to the nodes in $S_x$. \label{s:sendID}

\item
Every active node $x$ sets $n_x$ to be the number of nodes which send to $x$ in Step \ref{s:sendID}. 
If $n_x \geq  Low-t$ then $x$ resets $S_x$ to be the set of IDs which were received from at least $\beta$ nodes.
For each $ID$ in $S_x$, $x$ sends the query $<ID ?>$ to a random set of $C \log n$ nodes. \label{s:query-sending}

\item
Every light node $x$ answers a query $<ID?>$ if ID is in $S_x$ and the query is sent by a node in $S_x$.  An ID in $S_x$ is considered \emph{validated} if $x$ received at least $\delta C \log n$ responses to the query for ID. Each \act node $x$ that sent queries removes from $S_x$ all IDs which are not validated. \label{s:query-answering}

\end{enumerate}
	
\item \textbf{Can we proceed?} Each \act node $x$ with $n_x \geq Low - t$ runs \lcba with the other  
	nodes in $S_x$.    The input bit to \lcba, $\rinput_x \leftarrow 1$ iff $n_x \geq High $. 
	If $n_x \geq Low $  then $\rout_x \leftarrow$ output of \lcba. 	
	\label{s:proceed}  

\item \textbf{Compute Byzantine Agreement} Each \act node $x$ with $\rout_x = 1$ runs \lcba with nodes in $S_x$, with input bit, $\vall_x$, set to the node's initial input bit.  \\
	Node $x$ then sets $\vall_x$ to the output of this \lcba.  \label{s:BA2}
          
\item \textbf{Take Majority:} Each active node, $x$, sends ($\rout_x$, $\vall_x$) to all nodes.  Then, each node $x$ with $n_x \geq Low - t$ sets $\rout_x$ to the majority $\rout$ bit received from nodes in $S_x$.  If this bit is $1$, then $\vall_x$ is set to the majority $\vall$ bit received from nodes in $S_x$. \label{s:maj} 

\item \textbf{\emph{\prom:}}  Each node $x$ runs \promAlg with the tuple $(\rout_x,\vall_x)$, and resets the tuple based on the outcome. \label{s:prom1}
\begin{enumerate}
\item
 If $\rout_x = 1$, then node $x$ terminates and outputs value $\vall_x$; 
 \item
 Else if $p< 1/(C\log n)$, $p$ doubles and $x$ repeats from Step~\ref{s:wake}.\label{s:centeredSample}
 \item  Else \{$pn\geq  n/(C\log n)$\} every node sends to all its neighbors to determine their IDs and all nodes execute \lcba to compute Byzantine agreement. \label{s:largep}
 \end{enumerate} 
\end{enumerate}

\subsection{\emph{\prom}}\label{s:promProblem}

Here we define a variant of the almost-everywhere to everywhere Byzantine agreement problem, which we call \prom.   In Section~\ref{s:prom}, we describe an algorithm, \promAlg, to solve this problem.

\begin{dfn} \label{d:prom-correctness}
	An algorithm is said to solve the \prom problem if it has the following properties.
	\begin{enumerate}
	\item If (i) there is at least a $t/n + 2\epsilon$ fraction of good nodes with tuple $(\rout,\vall) = (1,v)$, for the same bit $v$; and (ii) all remaining good nodes have $\rout$ value of $0$, then all nodes terminate with tuple $(\rout,\vall) = (1,v)$.
	\item If all good nodes have $\rout = 0$, then all nodes terminate with $\rout=0$. 
	\end{enumerate}
\end{dfn}

\section{Analysis of \ba} \label{s:main-analysis}

\subsection{Correctness}

We call one run of all the steps in the \ba algorithm an epoch.  We assume $t \leq (\fbad - \efbad)n$ for and fixed $\efbad>0$.  We also assume that $p< 1/(C\log n)$ except in Step~\ref{s:largep}.
\begin{lemma} \label{l:highprob}
The following events occur \whp in $n$. 
\begin{enumerate}
\item
The number of \act nodes is between $\minact$  and $\maxact$. \label{ls:num-active}
\item
 If there are at least $Low -t$ light nodes, then all active nodes receive at least $\beta$ copies of the ID of every \act node in Step~\ref{s:sendID}. \label{ls:copiesofactive}
\item
 If there are at least $Low -t$ light nodes, then all \act nodes will consider all IDs of \act nodes validated after Step~\ref{s:query-answering}. \label{ls:validated}
\item 
If an ID is contained in the $S_x$ sets of at most $(1-\epsilon)(\delta-t/n)n$ light nodes in Step~\ref{s:query-sending}, then that ID will not be validated.\label{ls:notvalidated}
 \end{enumerate}
\end{lemma}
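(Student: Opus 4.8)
The plan is to prove each of the four items as a separate high-probability concentration statement, using Chernoff bounds over the independent coin flips made in Steps~\ref{s:init}--\ref{s:query-answering}, and taking a union bound over the (polynomially many) events at the end. Throughout I will exploit that the adversary is \emph{static}, so the set of bad nodes — and hence which nodes are good — is fixed before any randomness is drawn; the only randomness is (a) each node's independent decision to become \act with probability $p$, and (b) each \light node's uniformly random choice of an ID from $S_x$ in Step~\ref{s:sendID}, and (c) each \act node's uniformly random choice of $C\log n$ nodes to query in Step~\ref{s:query-sending}. Since $p = \Theta(\log n / n)$ and there are at least $n - t = \Theta(n)$ good nodes, the expected number of \act good nodes is $p(n-t) = \Theta(\log n)$, which is exactly the regime where a Chernoff bound gives failure probability $n^{-\Omega(C)}$; choosing $C$ large as a function of the desired exponent $c$ handles the union bound.

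For item~\ref{ls:num-active}: the number of \act good nodes is a sum of $n-t$ independent Bernoulli($p$) variables with mean $p(n-t)$, so by a multiplicative Chernoff bound it lies in $[(1-\epsilon)p(n-t),\,(1+\epsilon)p(n-t)] = [\minact,\maxact]$ except with probability $2\exp(-\epsilon^2 p(n-t)/3) = n^{-\Omega(C)}$. For item~\ref{ls:copiesofactive}: condition on the (good) outcome of item~\ref{ls:num-active} and on the set of \light good nodes, and suppose there are at least $Low - t$ of them. Every \act node's ID lies in $S_x$ for every \light node $x$ (since \light nodes received all \act IDs in Step~\ref{s:wake}), so each \light node independently sends that particular ID with probability at least $1/|S_x| \ge 1/(\maxact + \epsilon p n)$. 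Hence the number of \light good nodes that forward a fixed \act node's ID stochastically dominates a Binomial with mean at least $(Low - t)/(\maxact + \epsilon p n)$; by Chernoff this is at least $\beta = (1-\epsilon)(Low-t)/(\maxact+\epsilon pn)$ except with probability $n^{-\Omega(C)}$, since this mean is again $\Theta(\log n)$. A union bound over the at most $n$ possible \act IDs and over all \act nodes (at most $n$) as recipients — note the recipient plays no role, every \act node receives what was sent — gives the claim. Item~\ref{ls:validated} then follows by a similar argument applied to Step~\ref{s:query-answering}: if an \act node's ID survives in the $S_x$ sets of at least $(Low-t)$ \light nodes (which item~\ref{ls:copiesofactive} together with the $n_x \ge Low - t$ reset in Step~\ref{s:query-sending} guarantees, up to the constant slack built into $\delta$), then a uniformly random query of $C\log n$ nodes hits a \light node holding that ID with probability at least $\delta = (1-\epsilon)(Low-t)/n$ each time, so the number of ``yes'' answers dominates a Binomial with mean at least $\delta C\log n$ and hence is at least $\delta C \log n$ except with probability $n^{-\Omega(C)}$; union-bound over all $\le n$ validated IDs and $\le n$ \act queriers. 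Item~\ref{ls:notvalidated} is the ``converse'' Chernoff bound: if an ID sits in the $S_x$ sets of at most $(1-\epsilon)(\delta - t/n)n$ \light good nodes, then even if every bad node (at most $t$) answers ``yes'' falsely, a random query hits an honest holder with probability at most $(1-\epsilon)(\delta - t/n)$, so the number of truthful ``yes'' answers plus the $\le (t/n)C\log n$ expected adversarial ``yes'' answers has mean at most $(1-\epsilon)\delta C \log n < \delta C\log n$; the upper Chernoff tail then bounds the probability that it reaches $\delta C \log n$ by $n^{-\Omega(C)}$, and we union-bound over the $\le n^k$ possible IDs and $\le n$ queriers (here the polynomial ID range forces the $n^k$ factor, which is still absorbed by taking $C$ proportional to $k + c$).

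The main obstacle is \emph{not} any single Chernoff bound but the bookkeeping of \emph{what may be conditioned on when}: the query choices in Step~\ref{s:query-sending} must be analyzed \emph{after} fixing the \light set and the post-Step~\ref{s:sendID} contents of all $S_x$ sets, and one must check that these are independent of the fresh query randomness — which they are, since queries are drawn only in Step~\ref{s:query-sending}. A related subtlety is that the adversary is \emph{rushing}: when a bad node decides whether to answer ``yes'' to a query in item~\ref{ls:notvalidated}, it may know which nodes were queried. This is harmless because we pessimistically assume \emph{every} bad node answers ``yes'' to \emph{every} query — giving the fixed additive $t/n$ term in the mean — so no conditioning on the query randomness by the adversary can help it; the bound on truthful honest ``yes'' answers is against a fixed (randomness-independent) set of honest holders. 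Finally one must verify the arithmetic that the slack constants — the $(1-\epsilon)$ in $\beta$ and $\delta$, the gap between $Low$, $High$, and $Low - t$, and $t \le (\fbad-\epsilon_0)n$ — are consistent, i.e. that $\delta - t/n$ is bounded below by a positive constant (it is, since $\delta \to (1-\epsilon)(1 - 2t/n - \epsilon)$ and $t/n \le 1/4 - \epsilon_0$ makes $1 - 3t/n - \epsilon$ a positive constant), so that all the relevant Binomial means are $\Theta(\log n)$ and the exponents in the Chernoff bounds are $\Theta(C\log n)$. Collecting the $O(n^{k+1})$ bad events and choosing $C = C(c,k)$ large enough completes the proof.
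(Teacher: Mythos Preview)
Your proposal is correct and follows essentially the same approach as the paper: each item is a Chernoff bound on a sum of independent Bernoulli trials whose mean is $\Theta(C\log n)$, followed by a union bound over polynomially many events. The paper's proof is somewhat terser and does not spell out the conditioning and independence issues (fresh query randomness in Step~\ref{s:query-sending} being independent of the \light set and post-Step~\ref{s:sendID} $S_x$ contents; the rushing adversary being neutralized by pessimistically letting every bad node answer ``yes''), so your discussion of those points is a genuine addition rather than a deviation. One small slip: in item~\ref{ls:validated} you write that the per-query success probability is ``at least $\delta$'' and then conclude the count is ``at least $\delta C\log n$'' \whp; as stated this gives no slack for the Chernoff tail. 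The actual per-query probability is at least $(Low-t)/n = \delta/(1-\epsilon)$, and it is precisely this $(1-\epsilon)$ gap that lets the Chernoff bound land at the threshold $\delta C\log n$ --- the paper makes this explicit by writing $(1-\epsilon)E[X] = \delta C\log n$.
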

\begin{proof}
For each of these items there is a random variable $X$ which is the number of successful independent trials.  In each case, we will show that $E[X] \geq C' \log n$ for some constant $C'$.  Then, Chernoff bounds imply that $Pr(|X-E[X]| \geq \lambda E[X]) \leq n^{-c}$, for any fixed $\lambda <1 $, and any fixed $c$, for $C'$ sufficiently large~\cite{mitzenmacher2017probability}. \\

\noindent
1) Let $X$ be the number of \act nodes. Each node of at least $n-t$ good nodes is \act with probability $p$.  Since $t < n/4$ and $p \geq (C\log n)/n$, then $E[X] \geq (C/2)\log n \geq C' \log n$, where $C$ in Step~\ref{s:init} is chosen sufficiently large.\\

\noindent
2) Fix an active ID.  That ID is sent out from a particular light node with probability at least $1/(\maxact + \epsilon pn)$. Let $X$ be the total number of copies sent of that ID. Then the expected number of copies sent out is at least $(Low-t)/(\maxact + \epsilon pn) \geq C' \log n$, for any fixed C', where constant C in Step~\ref{s:query-sending} is chosen sufficiently large. Hence, by Chernoff bounds, \whp, at least $(1-\epsilon)E[X] =\beta$ copies of the fixed active ID are received by each active node.  Finally, a union bound over at most $n$ possible active IDs that could be sent out establishes the result.\\

\noindent
3) Fix an active node $v$ and an ID $u$, such that $v$ queries about $u$ in Step~\ref{s:query-sending}.  Let $X$ be the number of light nodes queried by $v$ about ID $u$ in this step.  Then $E[X] = ((Low-t)/n) C \log n \geq C'\log n$, for any fixed $C'$ for $C$ in Step~\ref{s:query-sending} chosen sufficiently large.  Thus, by Chernoff bounds, \whp, at least $(1-\epsilon)  E[X]=\delta C \log n$ nodes will answer queries for the ID $u$, and so node $v$ will consider $u$ validated.  Finally, taking union bounds over all choices for $v$ and $u$ shows that, \whp, all active nodes will be considered validated by all active nodes.\\

\noindent
4) Fix an active node $v$ and an ID $u$, such that $v$ queries about $u$ in Step~\ref{s:query-sending}.  Assume there are exactly $(1-\epsilon)(\delta -t/n)n$ light nodes that contain the ID $u$ in their $S_x$ sets. Having fewer such nodes only decreases the probability that node $u$ is validated. Let $X$ be the number of light nodes queried by $v$ that answer the query for ID $u$. Then
\begin{eqnarray*}
	E(X) & = & (1-\epsilon)(\delta -t/n) C \log n\\
	& = & \frac{(1-\epsilon)(Low-t)-t}{n} (1-\epsilon)C \log n\\
	& = & ((1-\epsilon)(1-3(t/n)-\epsilon) -t/n) (1-\epsilon)C \log n\\
	& \geq & C' \log n
\end{eqnarray*}

 where the last step holds for any $C'$ for $C$ chosen sufficiently large, provided that $t/n<1/4$. 
Thus, by Chernoff bounds, by setting $\lambda = (1-\epsilon/2)(1-\epsilon) - 1$, we get that $Pr(X \geq (1-\epsilon/2) (\delta - t/n) C \log n) \leq n^{-c}$, for any fixed $c$, for $C$ sufficiently large.  Let $Y$ be the number of bad nodes that are queried by $v$ about ID $u$.  Again by Chernoff bounds, $Pr(Y \geq (1+\epsilon/2) (t/n) C \log n) \leq n^{-c'}$ for any fixed $c'$ for $C$ chosen sufficiently large.  Putting these two facts together shows that the number of nodes in the sample that may answer $v$'s query about ID $u$ is, \whp, less than $\delta$.

A union bound over all nodes $v$ and IDs $u$ completes the proof.\end{proof}

For a fixed epoch, let $\core$ be the set of active nodes that run \lcba in Step~\ref{s:proceed}.  
We show that the nodes participating in \lcba have the desired properties to successfully complete it when there are at least $Low-t$ light nodes. (See Lemma \ref{l:lcba}.)

From Lemma \ref{l:highprob}, we can observe the following.
\begin{lemma} \label{l:core}
If there are at least $Low-t$ light nodes  then w.h.p., we have the following
\begin{enumerate}
	\item Every \act node is in the \core, and therefore $|\core| \geq \minact$. 
	\item For all $x \in \core$, $\core \subseteq S_x$ after Step~\ref{s:query-answering}. 
	\item Let $B$ be the bad nodes in $\bigcup_{x \in \core} S_x$. At the conclusion of Step \ref{s:sample-intersection}, if  there are at least  $Low-t$ light nodes,  $|B|\leq \epsilon' pn$ for any $\epsilon'>0$, and $\frac{|B|}{|CORE|} \leq 1/2-\epsilon''$ for any $\epsilon''>0$.  \label{l:smallBadViews}
\end{enumerate}
\end{lemma}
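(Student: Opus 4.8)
The plan is to obtain all three items from Lemma~\ref{l:highprob} together with the structure of Steps~\ref{s:wake}--\ref{s:query-answering}, conditioning throughout on the \whp event that the four conclusions of Lemma~\ref{l:highprob} hold and on the hypothesis that there are at least $Low-t$ light nodes.

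For item~1, the key observation is that every \act node broadcasts its ID in Step~\ref{s:wake}, so every good node — in particular every light node $z$ — has every \act node's ID in $S_z$. Hence in Step~\ref{s:sendID} every light node sends a message to every \act node $x$, so $n_x$ is at least the number of light nodes, which is $\geq Low-t$; by the rule in Step~\ref{s:proceed}, $x$ therefore runs \lcba and lies in \core. Since \core consists by definition of good \act nodes, \core is exactly the set of good \act nodes, and Lemma~\ref{l:highprob}(\ref{ls:num-active}) gives $|\core|\geq\minact$. For item~2, fix $x\in\core$; as $n_x\geq Low-t$, Step~\ref{s:query-sending} resets $S_x$ to the IDs that reached $x$ from at least $\beta$ nodes in Step~\ref{s:sendID}, and by Lemma~\ref{l:highprob}(\ref{ls:copiesofactive}) this includes the ID of every \act node, in particular of every node in \core. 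Step~\ref{s:query-answering} then only deletes IDs $x$ fails to validate, and by Lemma~\ref{l:highprob}(\ref{ls:validated}) every \act ID is validated by $x$; hence $\core\subseteq S_x$ after Step~\ref{s:query-answering}.

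Item~3 is the heart of the lemma, and I would prove it by double counting incidences between light nodes and bad IDs. If $b\in B$, then $b$ survives in $S_x$ for some $x\in\core$, which (by the structure of Steps~\ref{s:query-sending}--\ref{s:query-answering}) means $b$ was validated by $x$; the contrapositive of Lemma~\ref{l:highprob}(\ref{ls:notvalidated}) then forces $b$'s ID to lie in the $S_z$ sets of strictly more than $(1-\epsilon)(\delta-t/n)n$ light nodes $z$. On the other side, each light node $z$ has $|S_z|\leq\maxact+\epsilon pn$ by definition, and (as in item~1) $S_z$ already contains all $\geq\minact$ \act IDs, so $S_z$ holds at most $\maxact+\epsilon pn-\minact\leq 3\epsilon pn$ bad IDs. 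Summing over the fewer than $n$ light nodes bounds the number of (light node, bad ID) incidences by $3\epsilon pn^2$, whereas each of the $|B|$ bad nodes accounts for more than $(1-\epsilon)(\delta-t/n)n=(1-\epsilon)\big((1-\epsilon)(Low-t)-t\big)$ of them; since $t/n$ is bounded away from $1/4$ (recall $t\leq(\fbad-\efbad)n$), this last quantity is $\Theta(n)$ for $\epsilon$ small — exactly the arithmetic already carried out inside the proof of Lemma~\ref{l:highprob}(\ref{ls:notvalidated}) — say at least $c_0 n$. Dividing, $|B|\leq (3\epsilon/c_0)\,pn=:\epsilon' pn$, and since $\epsilon$ is a free small constant $\epsilon'$ can be made smaller than any prescribed value; finally, $|\core|\geq\minact=(1-\epsilon)p(n-t)\geq(1-\epsilon)\tfrac34 pn$, so $|B|/|\core|\leq \tfrac{4\epsilon'}{3(1-\epsilon)}$, which is $\leq 1/2-\epsilon''$ once $\epsilon$ (hence $\epsilon'$) is chosen small enough.

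The step I expect to be the real obstacle is making the constants in item~3 fit: the incidence count only closes because the light threshold $\maxact+\epsilon pn$ sits only $O(\epsilon pn)$ above $\minact$, so a light node's $S_z$ is essentially saturated by \act IDs and has only $O(\epsilon pn)$ room for bad ones, while at the same time validation forces every surviving bad ID to occupy a $\Theta(n)$-sized block of light $S_z$ sets. Both facts — ``$S_z$ is saturated'' and ``a validated bad ID is widespread'' — degrade as $t/n\to 1/4$, so the quantifier bookkeeping (``$\epsilon$ small relative to $\efbad$, and then $\epsilon$ small relative to the targets $\epsilon',\epsilon''$'') has to be threaded carefully; everything else is routine bookkeeping layered on Lemma~\ref{l:highprob}.
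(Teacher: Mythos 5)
Your proposal is correct and follows essentially the same route as the paper: items (1) and (2) read off directly from the algorithm's structure together with Lemma~\ref{l:highprob}, and item (3) is proved by the same incidence/double-counting argument — a validated bad ID must appear in the $S_z$ sets of at least $(1-\epsilon)(\delta-t/n)n$ light nodes (Lemma~\ref{l:highprob}(\ref{ls:notvalidated})), while each light node's set, being saturated by the $\geq\minact$ \act IDs within its $\maxact+\epsilon pn$ size bound, has only $O(\epsilon pn)$ room for bad IDs — followed by the same division and the same $|\core|\geq\minact$ bound for the ratio. Your constant bookkeeping (tracking the denominator as $\Theta_{\efbad}(n)$ rather than the paper's $\efbad\geq\sqrt{\epsilon}$ shortcut) differs only cosmetically, and your caveat that $\epsilon$ must be chosen small relative to $\epsilon',\epsilon''$ matches the paper's own qualification.
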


\begin{proof}
\noindent
1) Follows from the way $n_x$ is set in Step~\ref{s:query-sending}, and also from Lemma~\ref{l:highprob}(\ref{ls:num-active}).\\

\noindent
2) Follows from Lemma~\ref{l:highprob}(\ref{ls:copiesofactive}) and Lemma~\ref{l:highprob}(\ref{ls:validated}).\\

\noindent
3) We note that no bad ID will be present in an $S_x$ unless it is validated, and by Lemma~\ref{l:highprob}(\ref{ls:notvalidated}) w.h.p., this requires at least $$(1-\epsilon)(\delta-(t/n))n$$ light nodes which contain this ID in their $S_x$ sets. 
A light node contains no more than $\maxact(1+\epsilon)$ IDs, of which at least $\minact$ number are \act.
 
Since $\maxact(1+\epsilon)-\minact < 4\epsilon pn $,
there can be at most $4\epsilon pn$ IDs of bad nodes contained in $S_x$ for each light node $x$.  As there are no more than $n$ light nodes, Lemma \ref{l:highprob} implies that the total number of bad nodes 
which are validated is less than 
\begin{eqnarray*}
	(n) \left(\frac{4\epsilon pn}{(1-\epsilon)(\delta-t/n)n}\right)
	 & \leq & \frac{ 4\epsilon pn^2 }{(1- \epsilon)^2 (n-4(\fbad - \efbad)n-
\epsilon n )}\\
& = & \frac{ 4\epsilon pn^2 }{(1- \epsilon)^2 (n+4 \efbad n-
\epsilon n )}\\
& \leq &\frac{ 12 \sqrt{\epsilon} pn }{(1-\epsilon)^2} \\
& < & \epsilon' pn
\end{eqnarray*}

Where the last line follows for any $\epsilon'$, provided that $\epsilon$ is sufficiently small, and $t< (1/4 -\sqrt{\epsilon})n$; and the second to last line follows since $\epsilon$ was chosen such that $\efbad \geq \sqrt{\epsilon}$.
Thus the total number of bad nodes in $\bigcup_{x \in \core} S_x$ is  less than $\epsilon' pn$.

Since $\core$ contains at least $\minact=(1-\epsilon)p(n-t)$ good nodes, $|\core| \geq (1-\epsilon)p(n-t)$, and hence $\frac{|B|}{|CORE|}< 1/2-\epsilon$.
\end{proof}

\begin{figure}[t]
    \centering
    \includegraphics[width=0.8\columnwidth]{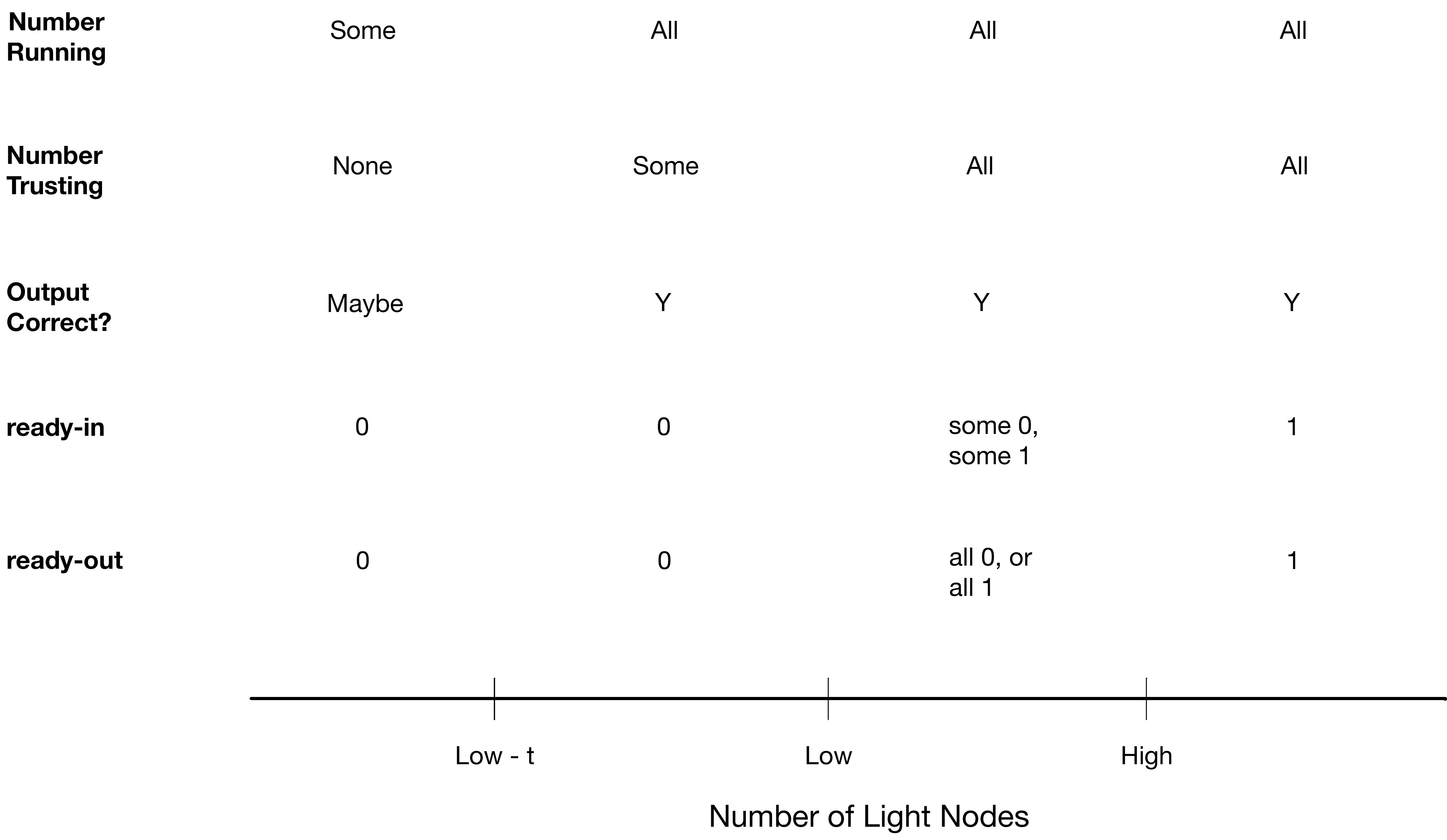}
    \caption{This figure illustrates possible outcomes of \lcba in Step~\ref{s:proceed}, based on what is proven in Lemma~\ref{l:coreBA}.  In each row, the different outcomes depend on the range of the number of light nodes ($L$), as given on the line at the bottom of the figure.  The first row gives the number of \act nodes running \lcba.  The second row gives the number of \act nodes that set their \rout value to the output of \lcba.  The third row says whether or not \lcba actually runs correctly.  The forth row gives the possible $\rinput$ values for the \act nodes.  Finally, the last row gives the possible $\rout$ values for the \act nodes after Step~\ref{s:proceed} is completed.}
    \label{fig:BA-figure}
\end{figure}

Lemma~\ref{l:core} and Lemma~\ref{l:lcba} imply that \lcba can be successfully run when there are at least $Low-t$ light nodes. 
The following lemma follows from this fact and from Lemma \ref{l:highprob}.  Figure~\ref{fig:BA-figure} illustrates part of this lemma.

\begin{lemma} \label{l:coreBA} 
Let $L$ be the number of light nodes in an epoch of \ba.  Then w.h.p.,
\begin{enumerate}
\item
If  $High \leq L $,\\
1)  All \act nodes have $\rinput=1$, they run 
	\lcba and decide on $\rout= 1$ when run in Step~\ref{s:proceed}; and\\
2) All \act nodes $y$ run \lcba in Step \ref{s:BA2} and set their $value$ bit to the input bit $value_x$ of some \act node $x$. 

\item
If  $Low \leq L < High$,\\
1)  All \act nodes successfully run 
	\lcba but they may start with differing values for $\rout$ in Step~\ref{s:proceed}.\\ 
2) If the output is a 1, all \act nodes $y$ set $\rout=1$ and they will successfully run \lcba in Step \ref{s:BA2} and set $value_y$ to the input $value_x$ for some \act node $x$.\\ 
3) If the output is a 0, all \act nodes set $\rout=0$.

\item
If  $Low-t \leq L < Low$, 
all \act nodes  will successfully run \lcba in Step~\ref{s:proceed}, though some nodes will disregard the output. All \act nodes will start with $\rinput =0$ and all \act nodes will have $\rout= 0$. 

\item
If   $L< Low-t$,
some \act nodes  may run a possibly flawed \lcba in Step~\ref{s:proceed}, though all \act nodes will disregard the output. All \act nodes will start with $\rinput=0$ and end with $\rout=0$.
\end{enumerate}
\end{lemma}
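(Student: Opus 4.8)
The plan is to分析 each of the four ranges of $L$ separately, in each case using Lemma~\ref{l:core} and Lemma~\ref{l:lcba} to establish correctness of the \lcba calls, and Lemma~\ref{l:highprob} to control how many copies of IDs / validations are received, which in turn determines the $n_x$ values and hence the $\rinput$ and $\rout$ assignments.

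First I would handle the two ``high'' regimes $L \geq Low$. The key observation is that when $L \geq Low-t$ light nodes exist, Lemma~\ref{l:core}(1)-(2) give that every \act node is in \core and $\core \subseteq S_x$ for every $x \in \core$, while Lemma~\ref{l:core}(3) gives $|B|/|\core| \le 1/2 - \epsilon''$; these are exactly the hypotheses of Lemma~\ref{l:lcba} (the ID-range hypothesis holds since all IDs lie in $[1,n^k]$). Hence \lcba run in Step~\ref{s:proceed} succeeds: all participants agree on a common output bit. To pin down $\rinput$ and the common output, I would argue about $n_x$: each \act node $x$ receives, for each \act ID, at least $\beta$ copies (Lemma~\ref{l:highprob}(\ref{ls:copiesofactive})), so $n_x$ counts at least the $L$ light nodes that sent to it; when $High \le L$ this forces $n_x \ge High$ for every \act node, so every \act node sets $\rinput = 1$, and by \lcba's validity the common output is $1$; then $n_x \ge Low$ as well, so every \act node copies this $1$ into $\rout_x$, triggering Step~\ref{s:BA2}, where a second application of Lemma~\ref{l:lcba} (same hypotheses) yields agreement on some \act node's input bit. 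When $Low \le L < High$, the same success of \lcba holds, but different \act nodes may see $n_x$ on either side of $High$, so $\rinput$ values may differ; \lcba's agreement property still forces a common output, and then I would check that if that output is $1$ every \act node has $n_x \ge Low$ (hence sets $\rout=1$ and runs Step~\ref{s:BA2}), while if it is $0$ every \act node sets $\rout=0$. This uses only that $n_x \ge Low$ is guaranteed once $L \ge Low$, again via Lemma~\ref{l:highprob}(\ref{ls:copiesofactive}).

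Next I would treat $Low - t \le L < Low$. Here Lemma~\ref{l:core} still applies (it only needs $\ge Low-t$ light nodes), so every \act node is in \core, the $S_x$ sets still satisfy Lemma~\ref{l:lcba}'s hypotheses, and every \act node runs \lcba correctly in Step~\ref{s:proceed}. The point is that now every \act node has $n_x < High$ (at most $L + |B| < High$, bounding the bad contributions by Lemma~\ref{l:core}(3) with $\epsilon'$ small), so $\rinput = 0$ for all; moreover $n_x$ may be below $Low$, so some \act nodes will not copy the \lcba output into $\rout$ — they disregard it — but \emph{every} \act node ends Step~\ref{s:proceed} with $\rout = 0$, either because it never updated $\rout$ (initialized to $0$ in Step~\ref{s:init}) or because \lcba, run on all-$0$ inputs, outputs $0$ by validity. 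Finally, for $L < Low - t$: the hypotheses of Lemma~\ref{l:core} are no longer guaranteed, so the $S_x$ sets of some \act nodes may be large or contain many bad nodes, and the \lcba they run in Step~\ref{s:proceed} may be flawed; but every \act node has $n_x < Low$ here (bounding $n_x \le L + |B|$, and noting even a pessimistic bad contribution keeps this below $Low$ — this needs a short counting check), so no \act node copies the \lcba output into $\rout$, every \act node keeps $\rinput = 0$ and $\rout = 0$.

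The main obstacle I anticipate is the bookkeeping around $n_x$ in the borderline cases: I need tight-enough control that $n_x$ lands in the correct interval relative to $Low$ and $High$ regardless of how the adversary distributes bad ``light''-claiming senders, and that the definitions $High = Low + t$ and $\beta = (1-\epsilon)(Low-t)/(\maxact+\epsilon pn)$ make the arithmetic close exactly right. Concretely, the delicate inequality is that in the $Low-t \le L < Low$ case a node's count of genuine light senders ($\ge \beta$ copies of common IDs, so $\ge L$ of them) plus at most $|B| < \epsilon' pn$ spurious senders stays below $High$, while in the $L \ge High$ case the count stays at or above $High$; this requires $t$ to dominate $\epsilon' pn$, which holds since $t \le (\fbad - \efbad)n$ is $\Theta(n)$ while $pn = \Theta(\log n)$ — so for large $n$ the bad contribution is negligible against the width-$t$ bands. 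All probabilistic statements are ``w.h.p.'' via the events of Lemma~\ref{l:highprob} and a union bound over the (constantly many) cases and over epochs.
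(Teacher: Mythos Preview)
Your overall structure matches the paper's proof: a four-way case split on $L$, invoking Lemmas~\ref{l:core} and~\ref{l:lcba} for correctness of \lcba whenever $L \geq Low - t$, and then reading off $\rinput$ and $\rout$ from bounds on $n_x$. The paper proceeds in essentially the same order.

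However, your bookkeeping on $n_x$ --- exactly the part you flag as the main obstacle --- contains a real error. You bound the bad contribution to $n_x$ by $|B|$ from Lemma~\ref{l:core}(\ref{l:smallBadViews}), writing $n_x \leq L + |B|$. This is the wrong quantity: $|B|$ counts bad nodes that survive validation in $\bigcup_x S_x$, but $n_x$ is set in Step~\ref{s:query-sending} to the number of senders in Step~\ref{s:sendID}, \emph{before} any filtering, and any of the $t$ bad nodes can send there regardless of whether it later appears in some $S_x$. The correct bound is simply $n_x \leq L + t$ (at most $L$ light good nodes plus at most $t$ bad nodes send in Step~\ref{s:sendID}), while on the lower side $n_x \geq L$ because every light node has the active node's ID in its $S$ set after Step~\ref{s:wake} and therefore sends to it. (Lemma~\ref{l:highprob}(\ref{ls:copiesofactive}) concerns copies of IDs received, not $n_x$; you do not need it for this step.)

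With $L \leq n_x \leq L + t$ and the definition $High = Low + t$, all four cases close immediately, with no delicate arithmetic: $L \geq High$ forces $n_x \geq High$; $L < Low$ forces $n_x < Low + t = High$; $L < Low - t$ forces $n_x < Low$; and $L \geq Low - t$ forces $n_x \geq Low - t$. In particular your Case~4 argument --- where you correctly note that Lemma~\ref{l:core} no longer applies and hence $|B|$ is uncontrolled --- is rescued only by this elementary bound $n_x \leq L + t < Low$. This is precisely what the paper's proof does, and it is correspondingly short.
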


\begin{proof}

	If $L \geq Low -t$, then $n_x \geq Low-t$ for all nodes $x$, thus in Step~\ref{s:proceed}, all \act nodes have $\rinput=0$, disregard the output of \lcba, and set $\rout=0$.

	By Lemmas~\ref{l:core} and~\ref{l:lcba}, when $L \geq Low-t$, \lcba will run successfully.  If $L < Low$, then all \act nodes $x$ have $n_x < High = Low +t$, so in Step~\ref{s:proceed}, all \act nodes $x$ have $\rinput_x = 0$.  Thus, by the consistency property of \lcba, all \act nodes $x$ have $\rout_x=0$.
	
	If $L < High$, then all active nodes running \lcba in Step~\ref{s:proceed} may start with different $\rinput$ values, but by the correctness of \lcba, they will all end with the same $\rout$ value.  If the $\rout$ vale is $1$, in Step~\ref{s:BA2},  \lcba will run correctly and they will all set their $value$ bit to the input bit, $value_x$ of some \act node $x$. 

	If $L \geq High$, then any \act node $x$ has $n_x \geq High$, and so has $\rinput_x=1$.  Thus, after Step~\ref{s:proceed}, by the validity of \lcba, all active nodes will have $\rout= 1$.  Thus, they will all run \lcba in Step~\ref{s:BA2} and will all set their value bit to the input value bit of some \act node.
\end{proof}

\begin{lemma} \label{l:agree}
At the end of each epoch, \whp, all nodes either terminate and output the same value or they all go to the next epoch.
\end{lemma}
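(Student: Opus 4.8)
The plan is to combine Lemma~\ref{l:coreBA}, which characterizes the possible $(\rout, \vall)$ states of the active nodes at the end of Step~\ref{s:maj} as a function of the number $L$ of light nodes, with the correctness of \promAlg (Definition~\ref{d:prom-correctness}) as invoked in Step~\ref{s:prom1}. The strategy is to do a case analysis on the range of $L$, exactly as in Figure~\ref{fig:BA-figure}, show that after Step~\ref{s:maj} the good nodes collectively satisfy one of the two preconditions of the \prom problem, and then conclude via Definition~\ref{d:prom-correctness} that all good nodes hold identical $\rout$ bits after Step~\ref{s:prom1}. Since the decision to terminate (Step~\ref{s:prom1}(a)) versus proceed to the next epoch (Step~\ref{s:prom1}(b)/(c)) is made solely on the value of $\rout_x$, agreement on $\rout$ forces all nodes to take the same branch; and when $\rout = 1$ everyone agrees on $\vall$ as well, so all nodes output the same value.

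First I would handle the case $L \geq High$. By Lemma~\ref{l:coreBA}(1), all active nodes run \lcba in Step~\ref{s:proceed}, agree on $\rout = 1$, run \lcba again in Step~\ref{s:BA2}, and agree on a common $\vall = v$ equal to some good node's input bit. In Step~\ref{s:maj} the active nodes broadcast $(1, v)$; every non-active good node with $n_x \geq Low - t$ takes the majority over $S_x$. Here I must check that the majority received by such a node is indeed $(1,v)$: by Lemma~\ref{l:core}(2) and~(3), the set $S_x$ of such a node contains at least $\minact$ good active nodes (all sending $(1,v)$) and at most $\epsilon' pn$ bad nodes, which is a strict minority of the active contributors, so the majority is $(1,v)$. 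Good nodes with $n_x < Low - t$ keep $\rout = 0$; but since $L \geq High \geq Low - t$, Lemma~\ref{l:highprob}(1)-type counting (or rather the fact that at least $L \geq Low - t$ light nodes send in Step~\ref{s:sendID}) guarantees $n_x \geq Low - t$ for every good node, so in fact no good node defaults. Hence all good nodes end Step~\ref{s:maj} with $(\rout,\vall) = (1,v)$, which trivially satisfies precondition~1 of Definition~\ref{d:prom-correctness} (a $t/n + 2\epsilon$ fraction — indeed all — hold $(1,v)$), and \promAlg makes everyone terminate with $v$. For $Low \leq L < High$, Lemma~\ref{l:coreBA}(2) gives a common $\rout$ output of \lcba among active nodes: if it is $0$, all good nodes with $n_x \geq Low - t$ set $\rout = 0$ (and again $L \geq Low - t$ means all good nodes qualify), so all good nodes have $\rout = 0$, precondition~2 holds, and everyone proceeds to the next epoch; if it is $1$, the argument is as in the previous case and everyone terminates with a common $v$. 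For $Low - t \leq L < Low$, Lemma~\ref{l:coreBA}(3) says all active nodes end with $\rout = 0$, so by the same majority argument all good nodes have $\rout = 0$, precondition~2 holds. Finally, for $L < Low - t$, we have $n_x < Low - t$ for every good node (no good node receives enough messages in Step~\ref{s:sendID}), so every good node defaults to $\rout = 0$ in Step~\ref{s:maj} regardless of what the flawed \lcba did, precondition~2 holds, and everyone proceeds.

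The main obstacle, and the step deserving the most care, is the boundary bookkeeping around the threshold $Low - t$: I need to argue cleanly that $n_x \geq Low - t$ for \emph{all} good nodes exactly when $L \geq Low - t$, and $n_x < Low - t$ for \emph{all} good nodes when $L < Low - t$, so that the ``default to $\rout = 0$'' branch and the ``take majority'' branch in Step~\ref{s:maj} partition the good nodes consistently within each case — otherwise one could imagine a mixed situation where some good nodes default and others take a majority, breaking the clean dichotomy needed to invoke Definition~\ref{d:prom-correctness}. This follows because each light node sends exactly one ID to all of $S_x \ni x$ in Step~\ref{s:sendID}, so $n_x$ equals (up to the adversary adding more, which only helps cross the threshold in the $L \geq Low-t$ case, and which the validation of Step~\ref{s:query-answering} plus Lemma~\ref{l:highprob}(4) prevents from mattering in the $L < Low - t$ case) the number of light nodes, uniformly over good $x$. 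Once this uniformity is established, the case analysis above closes the proof, with the \whp qualifier inherited from Lemmas~\ref{l:highprob}, \ref{l:core}, \ref{l:coreBA} and the correctness guarantee of \promAlg via a union bound over the $O(1)$ invoked events.
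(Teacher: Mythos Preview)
Your overall strategy---case split on $L$, invoke Lemma~\ref{l:coreBA} to determine the \act nodes' $(\rout,\vall)$ state, propagate through Step~\ref{s:maj}, then check one of the two preconditions in Definition~\ref{d:prom-correctness}---is exactly the paper's approach. The paper collapses your four cases into two (some \act node ends Step~\ref{s:BA2} with $\rout=1$, versus all \act nodes have $\rout=0$), but the logic is the same.

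There are, however, two genuine gaps in your execution of the majority argument for Step~\ref{s:maj}.

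First, your invocation of Lemma~\ref{l:core}(2),(3) for a \emph{non-active} node $x$ is wrong: that lemma describes $S_x$ only for $x \in \core$, i.e.\ for \act nodes, after the reset and validation of Step~\ref{s:sample-intersection}. Non-active light nodes never reset $S_x$; their $S_x$ is simply the set of IDs received in Step~\ref{s:wake}. The correct argument is the one the paper gives directly: a light node $y$ has $|S_y| \le \maxact + \epsilon pn$ by definition, and $S_y$ contains all \act IDs because every \act node broadcast to everyone in Step~\ref{s:wake}; hence at least $\minact$ of the messages $y$ receives in Step~\ref{s:maj} come from good \act nodes and carry $(1,v)$, which is a strict majority of $|S_y|$.

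Second---and this is the point you flagged yourself as the ``main obstacle''---your claim that $n_x \ge Low - t$ for \emph{every} good node whenever $L \ge Low - t$ does not hold. In Step~\ref{s:sendID}, light nodes send only to the nodes in their own $S_x$, i.e.\ to (purported) \act nodes; a non-active good node receives nothing there, so its $n_x$ is zero (or undefined). The pseudocode's condition ``$n_x \ge Low - t$'' in Step~\ref{s:maj}, read literally, would exclude all non-active nodes from taking majority, which would leave only $O(pn)$ good nodes with $\rout=1$---far too few for precondition~1 of Definition~\ref{d:prom-correctness}. The paper's own proof makes clear that the operative distinction in Step~\ref{s:maj} is \emph{light versus non-light}: light nodes take majority (and get $(1,v)$ by the argument above), non-light nodes keep $\rout=0$. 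In the $\rout=1$ case there are at least $Low$ light nodes, and since $Low = n - 2t - \epsilon n$ exceeds a $t/n + 2\epsilon$ fraction of the good nodes, precondition~1 holds. You should rewrite the Step~\ref{s:maj} analysis in terms of the light/non-light split rather than $n_x$.
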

\begin{proof}
By Lemma \ref{l:coreBA}, if any \act node $x$ sets  $\rout=1$ after Step~\ref{s:BA2}, all \act nodes will set their tuple $(\rout, \vall)$ to the value $(1,v)$, and $v$ will be the input bit of some node in $\core$.  Moreover, there must be at least $Low$ light nodes. Since every light node $y$ has at least $min_a$ IDs of \act nodes in $S_y$, and $|S_y| \leq max_a +\epsilon n$, in Step~\ref{s:maj}, the majority of the messages received from nodes with IDs in $S_y$ will be $(1,v)$ and $y$ will set $\rout=1$ and $\vall_y=v$.  
Since $Low = n-2t+\epsilon \geq t+2\epsilon$, all good nodes will come to agreement on $(1,v)$ in Step \ref{s:prom1}, when the Promise Agreement problem is solved correctly (by Lemma~\ref{l:prom} in Section~\ref{s:prom}).  

On the other hand, if any \act node $x$ sets their value $\rout_x$ to 0, then we must be in Case 2, 3 or 4 of Lemma \ref{l:coreBA}. In these cases, all \act nodes have $\rout=0$, at the end of Step~\ref{s:BA2}. Thus, all light nodes set 
$\rout=0$ since it is the majority value received in Step~\ref{s:maj}, and all nodes which are not light do not change their initial $\rout$ value from 0.  Therefore, all nodes agree on $\rout=0$. With $\rout=0$, all nodes execute Steps \ref{s:centeredSample}  or  \ref{s:largep}, depending on the value of $p$. 
\end{proof}

\subsection{Resource Costs}

\begin{lemma}\label{l:eCost}
In any epoch, \whp, the algorithm sends $O((pn)^2 \log n + pn^2 + n \log n + T_e)$ messages, where $T_e$ is the minimum of $n^2$ and the number of messages sent by bad nodes in that epoch.  Moreover, in any epoch, the algorithm takes time polylogarithmic in $n$. 
\end{lemma}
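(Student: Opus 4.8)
The plan is to condition on the high-probability events of Lemmas~\ref{l:highprob} and~\ref{l:core} (so in particular the number of \act nodes is $\Theta(pn)$ and every \light node $x$ has $|S_x| = O(pn)$ by the definition of \light), and then to walk through the pseudocode of Section~\ref{s:baCode} step by step, charging every message sent by a good node either to one of the four stated terms or to the bad‑node budget $T_e$; summing and a final union bound then yields the claim. The ``cheap'' steps are immediate: in Step~\ref{s:wake} each of the $O(pn)$ \act nodes sends its ID to all $n$ nodes, costing $O(pn^2)$; in Step~\ref{s:sendID} each of at most $n$ \light nodes sends one ID to the $O(pn)$ members of its $S_x$, again $O(pn^2)$; in Step~\ref{s:maj} each \act node sends its tuple to all $n$ nodes, $O(pn^2)$; and \promAlg in Step~\ref{s:prom1} has every one of the $n$ nodes issue $O(\log n)$ sample requests ($O(n\log n)$ total), while the number of responses a good node sends is bounded by the number of requests it receives, i.e.\ $O(n\log n)$ requests from good nodes plus at most $T_e$ from bad ones. (The branch in Step~\ref{s:largep} is reached only when $p \geq 1/(C\log n)$, outside the standing assumption of this section; there the all‑to‑all round and the single \lcba call cost $O(n^2)$, which is absorbed in the overall bound of Theorem~\ref{t:main}, so I set it aside here.)

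The delicate part --- and what I expect to be the main obstacle --- is Step~\ref{s:sample-intersection}(b)--(c). After the reset, an \act node $x$'s set $S_x$ consists of the IDs received at least $\beta$ times in Step~\ref{s:sendID}, and $x$ sends $C\log n$ queries per such ID while \light nodes answer. I would bound $\sum_{x \in \core}|S_x|$ by separating good and bad IDs. A good ID can be received $\geq\beta$ times by $x$ only if at least $\beta$ of the $\leq n$ \light nodes that include $x$ happened to pick it, so each $x$ has at most $n/\beta = O(pn)$ good IDs, contributing $O((pn)^2)$ over all $O(pn)$ \act nodes. For bad IDs, the key observation is that a bad ID enters $S_x$ only after $\beta$ messages carrying it reach $x$ in Step~\ref{s:sendID}: those sent by good \light nodes total at most $n$ (one ID per \light node), hence cause at most $n/\beta = O(pn)$ ``threshold crossings'', while those sent by bad nodes total at most $T_e$ and hence cause at most $T_e/\beta$ crossings. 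Because $\beta = \Theta(n/(pn)) = \Theta(1/p)$, this gives $\sum_{x\in\core}|S_x| = O((pn)^2 + pT_e)$; the queries issued in Step~\ref{s:query-sending} then number $C\log n\cdot O((pn)^2 + pT_e) = O((pn)^2\log n + T_e)$, using $p\log n < 1$ (and the $n^2$ cap on $T_e$ for an extremely chatty adversary), and the answers in Step~\ref{s:query-answering} are at most this many plus at most $T_e$ answers to bad queries. This cancellation of $pT_e\log n$ down to $O(T_e)$ is precisely where the choice of $\beta$ and the bound $p<1/(C\log n)$ do their work, and is the crux of the argument.

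For the \lcba invocations in Steps~\ref{s:proceed} and~\ref{s:BA2}, I would apply Lemma~\ref{l:lcba} with $G$ the set of \act nodes running \lcba ($|G| = O(pn)$) and $B$ the bad nodes in $\bigcup_x S_x$: by Lemma~\ref{l:core}(\ref{l:smallBadViews}) we have $|B| = O(pn)$ when there are at least $Low-t$ \light nodes, and in all cases $|B| \leq \min\{t,\ \sum_x|S_x\cap\text{bad}|\} = O(n)$ from the count above. Hence each \lcba call costs $O(pn\cdot\polylog(n))$ for the almost‑everywhere phase plus $|G|(|G|+|B|) = O(pn^2)$ for the one extra broadcast‑and‑majority round, i.e.\ $O(pn^2 + (pn)^2\log n)$ in total, and runs in $\polylog(n)$ rounds. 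Adding all the per‑step bounds gives the claimed $O((pn)^2\log n + pn^2 + n\log n + T_e)$. For latency, Steps~\ref{s:init},~\ref{s:wake},~\ref{s:sample-intersection},~\ref{s:maj} each take $O(1)$ rounds, and the \lcba calls and \promAlg take $\polylog(n)$ rounds by their specifications, so an epoch runs in $\polylog(n)$ rounds, completing the proof.
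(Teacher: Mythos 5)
Your proposal is correct and follows the same overall step-by-step charging as the paper's proof; the one place where you genuinely diverge is the sub-step you call the crux. The paper does not need the $pT_e\log n \to O(T_e)$ cancellation at all: it observes that when $S_x$ is reset in Step~\ref{s:query-sending}, each of the at most $n$ senders of Step~\ref{s:sendID} contributes a single ID toward the threshold counts, so the reset $S_x$ has size at most $n/\beta = O(pn)$ \emph{deterministically}, regardless of how many of those senders are bad; hence issuing queries costs $O((pn)^2\log n)$ with no $T_e$ term, and $T_e$ enters only through good nodes answering queries issued by bad nodes (and through \promAlg). Your alternative -- charging adversarial Step~\ref{s:sendID} traffic to $T_e$, noting $\beta=\Theta(1/p)$, and absorbing the resulting $p T_e\log n$ via $p<1/(C\log n)$ -- reaches the same bound and is somewhat more robust (it does not rely on each sender being counted toward only one ID per receiver), at the price of a slightly more involved accounting; note also that your per-node claim that a \emph{good} ID needs $\beta$ \light nodes to pick it is not literally true, since bad senders can pad the count, but your own global bookkeeping (at most $n$ counted messages per receiver from good \light nodes plus at most $T_e$ bad messages overall, each accepted ID consuming $\beta$ of them) already covers this, so it is only a matter of phrasing. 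The remaining items (Steps~\ref{s:wake}, \ref{s:sendID}, \ref{s:maj} at $O(pn^2)$ each, \lcba at $O((pn)^2)$ via Lemmas~\ref{l:lcba} and~\ref{l:core}, \promAlg at $O(n\log n)$ plus responses charged to $T_e$, and polylogarithmic latency dominated by \lcba) match the paper's proof, and deferring Step~\ref{s:largep} to the proof of Theorem~\ref{t:main} is exactly what the paper does.
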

\begin{proof}
There are $O(pn)$ \act nodes which send to all nodes and each light node sends one message to $O(pn)$ nodes, for a total of $O(pn^2)$ messages.  When $S_x$ is reset, it is reset to be no larger than $n/\beta=O(np)$.
 To validate its $S_x$, each \act node sends $O(\log n)$ messages for each element in $S_x$. There are $O(pn)$ \act nodes, each with $|S_x| =O(n/\beta)=O(pn)$. Hence, issuing  queries requires $O((pn)^2 \log n)$ messages by good nodes.  There are at most $T_e$ queries sent by bad nodes, so responding to queries requires $O(T_e + (pn)^2 \log n)$ messages.
 
 Computing \lcba in Steps~\ref{s:proceed} and~\ref{s:BA2}, requires $O((pn)^2)$ messages by Lemma~\ref{l:lcba}.  Then in Step~\ref{s:maj}, all \act nodes send to all nodes for $O(pn^2)$ messages.  Finally, in Step~\ref{s:prom1}, all nodes send $O(n \log n)$ messages to solve \promAlg, as shown in Lemma~\ref{l:prom}.  Thus, the total number of messages sent in the epoch is $O((pn)^2 \log n + pn^2 + n \log n + T_e)$.
 
 The time to perform all steps in an epoch is dominated by the cost of performing $\lcba$ which is polylogarithmic. 
 \end{proof}

\begin{lemma} \label{l:advcost}
The algorithm terminates in a decision in a given epoch, unless the adversary sends $\Omega(pn^2)$ messages.
\end{lemma}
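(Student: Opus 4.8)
The plan is to prove the contrapositive: if the adversary sends at most $c\, pn^2$ messages during the epoch, for a suitably small constant $c = c(\eps,\efbad)$, then, with high probability, all good nodes terminate with a common output in this epoch. (We may assume $p < 1/(C\log n)$; otherwise the epoch executes Step~\ref{s:largep}, in which every node learns all of its neighbors and runs \lcba, and so terminates unconditionally.) By Lemma~\ref{l:agree}, the only alternative to all good nodes terminating with a common value is that all nodes advance to the next epoch, so it suffices to exclude the latter. By Lemma~\ref{l:coreBA}(1) together with the first half of the proof of Lemma~\ref{l:agree}, if the number $L$ of light good nodes satisfies $L \geq High$, then every \act node finishes Step~\ref{s:BA2} with $\rout = 1$ and, consequently, all good nodes terminate. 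Since there are exactly $n-t$ good nodes and $High = Low + t = n - t - \eps n$, the inequality $L \geq High$ is equivalent to the statement that at most $\eps n$ good nodes are non-light. Hence the whole claim reduces to showing that an adversarial budget of $o(pn^2)$ messages in the epoch cannot render more than $\eps n$ good nodes non-light.

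To establish this, I would account for the adversary's cost per non-light good node. A good node $x$ becomes non-light exactly when, after Step~\ref{s:wake}, its set $S_x$ of received IDs has $|S_x| > \maxact + \eps pn$. In Step~\ref{s:wake} a good \act node transmits only its own ID and a good non-\act node transmits nothing, so every ID appearing in $S_x$ other than those of good \act nodes was delivered to $x$ inside a message from a bad node. By Lemma~\ref{l:highprob}(\ref{ls:num-active}), \whp there are at most $\maxact$ good \act nodes, so a non-light $x$ must have received more than $\eps pn$ IDs supplied by bad nodes. Since every message carries only $O(\log n)$ bits in the CONGEST model, each bad message can introduce only $O(1)$ new IDs into $S_x$; therefore the adversary must direct $\Omega(\eps pn)$ messages at $x$ in order to make $x$ non-light.

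Summing over all non-light good nodes: if strictly more than $\eps n$ good nodes were non-light, the adversary would have sent more than $\eps n \cdot \Omega(\eps pn) = \Omega(\eps^2 pn^2) = \Omega(pn^2)$ messages in the epoch (with $\eps$ a fixed constant). Choosing $c$ below the implied constant, an adversary that sends at most $c\, pn^2$ messages leaves at most $\eps n$ good nodes non-light, whence $L \geq High$, and the reduction in the first paragraph yields that all good nodes terminate with a decision, with high probability. The main obstacle — and the only step requiring care — is the per-node accounting of the preceding paragraph: one must verify that no adversarial trick can push $|S_x|$ past the light threshold without being charged one bad message per $O(1)$ injected IDs, i.e., that the CONGEST bandwidth bound genuinely forces $\Omega(\eps pn)$ adversarial messages per non-light node; granting that, the summation and the choice of $c$ are routine, and the handoff to Lemmas~\ref{l:coreBA} and~\ref{l:agree} is immediate.
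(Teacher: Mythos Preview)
Your proof is correct and follows essentially the same approach as the paper's: both argue that making more than $\eps n$ good nodes non-light requires the adversary to inject more than $\eps pn$ IDs (hence $\Omega(\eps pn)$ messages) at each such node, for a total of $\Omega(\eps^2 pn^2) = \Omega(pn^2)$ messages, and then invoke the $L \geq High$ case of Lemma~\ref{l:coreBA} to conclude termination. Your write-up is in fact more careful than the paper's terse proof---you explicitly handle the $p \geq 1/(C\log n)$ boundary, correctly route the ``termination'' conclusion through Lemma~\ref{l:agree} rather than Lemma~\ref{l:coreBA} alone, and justify the per-node charge via the CONGEST message size; the concern you flag about adversarial tricks is resolved exactly as you suspect, since each $O(\log n)$-bit message can carry only $O(1)$ IDs.
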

\begin{proof}
There are at least $High$ light nodes unless the adversary causes bad nodes to send more than $pn \epsilon$ messages to $n\epsilon$ nodes, for a total of $\Omega(pn^2)$ messages.  If there are at least $High$ light nodes in an epoch, then by Lemma~\ref{l:coreBA} the algorithm terminates with a decision. 
\end{proof}
 
 Note that $O((pn)^2 \log n + pn^2)= O(pn^2)$ except when $p> 1/\log n$, in which case our algorithm runs \lcba on all the nodes, by messaging all $n$ of their neighbors, for a total cost of $O(n^2)$. This is the bottleneck in the algorithm which causes it to be $O(\log n)$-competitive instead of $O(1)$-competitive.

Let $T$ be the minimum of $n^2$ and the total number of messages sent by the adversary, and $n$ be the number of nodes in the network.  We can now prove Theorem~\ref{t:main}.

\subsection{Proof of Theorem~\ref{t:main}}

\begin{proof}

By Lemma~\ref{l:advcost}, the algorithm will terminate in an epoch, unless the adversary sends $c pn^2$ messages in that epoch, for some constant $c$.    In epoch $i$, $p = (2^{i-1} \log n)/n$.  If we do not terminate in epoch $i$, then $T \geq c 2^{i-1} n \log n$.  In epoch $i$, by Lemma~\ref{l:eCost}, the total number of messages sent is $O((pn)^2 \log n + pn^2 + n \log n + T_e)$.    

We first consider the case where it's always true that $p \leq 1/\log n$, and note that $O((pn)^2 \log n + pn^2)= O(pn^2)$.  Thus, the message cost in epoch $i$ is $O(n 2^{i} \log n + T_i)$, where $T_i$ is the number of messages sent by the adversary in epoch $i$.  The $T_i$ terms clearly sum to $O(T)$.  If $\ell$ is the last epoch, then $O(\sum_{i=1}^{\ell} 2^i n \log n) = O(2^{\ell} n \log n) = O(T + n \log n)$.  Thus the total number of messages sent in this case is $O(T + n \log n)$.

We next consider the case where $p> 1/\log n$.  In this case, our algorithm runs \lcba on all the nodes, by messaging all $n$ of their neighbors, for a total cost of $O(n^2)$.  The value of $T$ in this case is $\Omega(n^2/\log n)$, so our total message cost is $O(T \log n)$. 

Since epoch $i$ has latency polylogarithmic in $n$ (by Lemma~\ref{l:lcba}), and there are at most $\log n$ epochs, the total latency is $O(\polylog(n))$.  Additionally, we note that when the algorithm terminates, by Lemma~\ref{l:advcost}, all good nodes come to agreement on an input bit of some node in $\core$.

Finally we note that we can also solve the leader election and committee election problems.  To do this, the active nodes use Feige's leader election algorithm to elect a committee in one step, or a leader in $\log^*n$ steps among the $CORE_x$ sets for every active node $x$.  This is done instead of selecting an agreement value as in the KSSV algorithm.
\end{proof}

\section{Additional Algorithms}

\subsection{\lcba} \label{s:lcba}

Here we prove Lemma ~\ref{l:lcba}. We do this by adapting the algorithm from~\cite{king2006scalable}. In that paper, all nodes have a view of all of other nodes and nodes are numbered $[1,n]$. 

The main idea of our adaptation is to show that for any $s$, $\log^{10} n \leq  s \leq n$,  there exists a deterministic assignment of IDs in  $[1,n^k] $ to a set of $s /\ln n$ committees, so that for every subset of size $s$ IDs, a $1-1/\ln^2 n$ fraction of committees are (1) ``sufficiently large"; and (2) contain a nearly representative fraction of both good and bad nodes. 

The algorithm in~\cite{king2006scalable} is built upon a family of bipartite graphs with expansion-like properties.  The existence of such graphs are proved using the probabilistic method (see Section 3 of \cite{king2006scalable}). We need the same properties here, but for a possibly much smaller subset of $s \leq n$ identities, which come from a much larger name space ($[1,n^k]$).  We show that we can start with identities in the range $[1,n^k]$, of which $s$ are active and generate a set of committees which have the required properties with respect to the active nodes, as is needed in each layer of the ``election graph" in 
\cite{king2006scalable}, Corollary 3.2. 

The following lemma achieves the needed result for the  ``static" network in that paper.  We note that, in our algorithm, each node in epoch $i$ knows that the number of active nodes lies in  the range $(1 \pm  \epsilon)2^i \log n$ for a fixed constant $\epsilon$, $0< \epsilon < 1$.  

\begin{lemma}  \label{l:samplerVee}
 Let $k,c, f_g,f_b$ be any positive constants, where $c,k \geq 1$, and $(1-\epsilon) \leq  f_g+f_b \leq (1+\epsilon)$.  Let $i$ be an integer such that $1 \leq i \leq \lg n/ \lg \lg n$, and $s =(1+\epsilon)2^i \log n$.  Further, let $L = [n^k]$, and $R = [s/\ln n]$ and $L_g, L_b \subseteq L$ be any disjoint subsets where $|L_g|  \geq  f_g s $ and $|L_b| \leq f_b s$.  Also for any bipartite graph over nodes $(L,R)$, let $\Gamma(r)$ denote the neighbors of node $r \in R$.  
 
 Then there exists a bipartite graph $(L,R)$, where each node in $R$ has degree $d \geq (C\ln^6 n)(n^k/s)$,  $C= 4kc/(1-\epsilon)$, such that for all but a $1/\ln^2  n$ fraction of nodes $r \in R$, all the following hold, with probability $1-3/n^c$: 
	\begin{enumerate}
\item
 Let $X_s=|\Gamma(r) \cap ( L_g \cup L_b) |$. Then $X_s  \in (1\pm 1/\ln n) (f_g+f_b) C \ln^6 n$.
\item
  Let $X_b=|\Gamma(r) \cap L_b | $. Then $X_b \leq (1 + 1/\ln n) f_b C  \ln^6 n$.
\item
 Let $X_g=|\Gamma(r) \cap  L_g  |$. Then $X_g  \geq  (1 - 1/\ln n )(1-1/\ln n) f_g  C \ln^6 n$.
\end{enumerate}
\end{lemma}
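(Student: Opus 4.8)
The plan is to use the probabilistic method, essentially mimicking the construction in Section 3 of~\cite{king2006scalable}, but being careful that the "active" sets $L_g$ and $L_b$ are a possibly tiny ($s \leq n$) subset of a large name space $L = [n^k]$. First I would fix the degree $d = \lceil (C \ln^6 n)(n^k/s) \rceil$ with $C = 4kc/(1-\epsilon)$, and construct a random bipartite graph $(L,R)$ by letting each node $r \in R$ independently pick its $d$ neighbors uniformly at random from $L$ (with or without replacement — either works, I would use without replacement to keep $\Gamma(r)$ a genuine set). The key point is that, conditioned on the sizes $|L_g| \geq f_g s$ and $|L_b| \leq f_b s$, the expected number of neighbors of a fixed $r$ landing in $L_g$ is $d |L_g|/n^k \geq f_g s \cdot (C\ln^6 n)/s = f_g C \ln^6 n$, and similarly the expectation for $L_b$ is at most $f_b C \ln^6 n$, and for $L_g \cup L_b$ it is $(f_g + f_b) C \ln^6 n$. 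So each of the three target quantities $X_g, X_b, X_s$ has expectation $\Theta(\ln^6 n)$, which is large enough for concentration.

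Next I would apply Chernoff bounds to a single fixed $r \in R$: with deviation parameter $1/\ln n$, the probability that $X_s$, $X_b$, or $X_g$ deviates from its expectation by more than a $(1 \pm 1/\ln n)$ factor is at most $\exp(-\Omega(\ln^6 n / \ln^2 n)) = \exp(-\Omega(\ln^4 n))$, which is far smaller than any polynomial $1/n^{c+1}$. (The slightly awkward triple factor $(1-1/\ln n)(1-1/\ln n)f_g$ in item 3 comes from composing the lower bound $|L_g| \geq f_g s$ with one more $(1-1/\ln n)$ Chernoff slack.) A union bound over the $|R| = s/\ln n \leq n$ nodes of $R$ and the three events then shows that, with probability $1 - O(n \cdot \exp(-\Omega(\ln^4 n))) \geq 1 - 1/n^{c}$ (in fact much better), \emph{every} $r \in R$ satisfies all three bounds simultaneously — so in particular all but a $1/\ln^2 n$ fraction do, and a graph with the stated properties exists. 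I would state it for "all but a $1/\ln^2 n$ fraction" rather than "all" only because that weaker conclusion is all the caller (the election graph of~\cite{king2006scalable}, Corollary 3.2) needs, and it gives room to also absorb the "sufficiently large committee" requirement if one prefers a union bound that is only polynomially, not super-polynomially, strong.

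The one genuine subtlety — and the step I expect to be the main obstacle — is the quantifier order: the lemma must hold simultaneously for \emph{all} choices of the disjoint subsets $L_g, L_b \subseteq L$ of the given sizes, whereas the naive probabilistic argument above fixes $L_g, L_b$ first. A union bound over all such pairs is hopeless, since there are $\binom{n^k}{\Theta(s)}^2$ of them. The fix, exactly as in~\cite{king2006scalable}, is to observe that we do not actually need it for all pairs: in the application, the identities of the $s$ active nodes are determined \emph{before} the algorithm (the adversary is static), so it suffices that for the \emph{actual} active set the graph is good, and this is what the single-$r$-then-union-over-$R$ argument gives once we also union over the (polynomially many, since $s \leq n$ and IDs are in $[1,n^k]$ — here we must be slightly more careful) relevant configurations, or alternatively we phrase the lemma as an existential statement "there exists a graph such that for the given $L_g, L_b$ ..." which is how it is stated and how it is used layer-by-layer in the election graph. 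Thus the resolution is really bookkeeping about what is quantified where; once that is pinned down, the remaining content is the routine Chernoff-plus-union-bound computation sketched above.
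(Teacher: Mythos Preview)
You have correctly identified the quantifier-order issue as the main obstacle, but your proposed resolution is wrong, and the missing idea is precisely the heart of the paper's proof. The graph \emph{must} be universal over all $L_g, L_b$: it is a fixed, deterministic part of the algorithm and is known to the full-information adversary, who may then choose which bad nodes inject themselves into the views (determining $L_b$) \emph{after} seeing the graph; and $L_g$ is determined by coin flips that happen after the graph is already fixed. So ``there exists a graph that works for the given $L_g, L_b$'' is not enough for the application, and your suggestion that the static-adversary assumption lets you sidestep the union over $L_g, L_b$ does not hold up.

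The paper's argument is structurally different from your per-$r$ Chernoff. Instead of bounding the probability that a single $r$ fails and then union-bounding over $r \in R$, it fixes a candidate ``exceptional set'' $R' \subseteq R$ of size $|R|/\ln^2 n = s/\ln^3 n$ and bounds the probability that \emph{every} node in $R'$ simultaneously fails, say, condition (3). The key is that the total number of edges from $R'$ into $L_g$ has expectation $d\,|R'|\,|L_g|/n^k = \Theta(s f_g \ln^3 n)$, so the Chernoff failure probability is $\exp(-\Omega(C s f_g \ln n))$. Because the exponent scales with $s$, this bound survives the union over both the $\binom{n^k}{f_g s} \leq e^{f_g s k \ln n}$ choices of $L_g$ and the $\binom{s/\ln n}{s/\ln^3 n} \leq 2^{s/\ln n}$ choices of $R'$, once $C \geq 4kc/(1-\epsilon)$. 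Your per-$r$ bound of $\exp(-\Omega(\ln^4 n))$ is exponentially too weak to beat $\binom{n^k}{\Theta(s)}$; aggregating over $R'$ before applying Chernoff is the essential trick you are missing.
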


\begin{proof} 
Consider a bipartite multigraph with sides $L$ and $R$ where each node in $R$ has $d=(C\ln^6 n)(n^k/s)$ neighbors chosen uniformly at random with replacement from $L$.
Fix disjoint sets $L_g, L_b \subseteq L$ where $|L_g| = f_g s$ and $|L_b| = f_b s$; and fix  $R' \subseteq R$, $|R'| = s/\ln^3 n \leq |R|/ \ln^2 n$. 

  Let $X$ be the number of edges from nodes in $R'$ to $L_g$. Then 
  $$E(X) = d|R'| \cdot \frac{f_gs}{n^k} = \frac{d f_gs^2}{n^k ln^3 n} = C sf_gln^3 n.$$
  
  By Chernoff  bounds, for any positive $\lambda$, we have $Pr(|X - E(X)| \geq \lambda) \leq 2e^{-\frac{\lambda^2}{2E(X)}}$.  Setting $\lambda = (1/\ln  n) E(X)$, we get
\begin{eqnarray*}
Pr(X \leq (1-1/\ln n)E(X)) & \leq & 2e^{-\frac{(1/\ln^2 n) E(X)}{2}} \\
& \leq & 2e^{-\frac{(1/\ln^2 n )(C s f_g \ln^3 n)}{2}} \\
& \leq & 2e^{-\frac{C sf_g \ln n}{2}}.
\end{eqnarray*}

Let $\xi_{L_g,R'}$ be the event that for all nodes in $R'$, condition (3) above fails. Let $\xi = \cup_{L_g,R'} \xi_{L_g,R'}$, for all $L_g$, $R'$ of appropriate size.  Then by a union bound, 
\begin{eqnarray*}
Pr(\xi) & \leq & {n^k \choose f_g s} {s/\ln n \choose s/\ln^3 n}  2e^{-\frac{C s f_g \ln n}{2}} \\ 
& \leq & e^{f_g s (k \ln n)} \cdot 2^{s /\ln n}   \cdot   2e^{-\frac{C s f_g \ln n}{2}} \\
& \leq &   e^{-\frac{C s f_g \ln n}{2} +  \ln 2 +  s/\ln n + f_g sk \ln n} \\
& \leq & 1/n^{c}
\end{eqnarray*}

The last line in the above holds for $C> 4kc/(1-\epsilon)$. Thus, with high probability,  a random bipartite graph satisfies condition (3) for all sets $R'$ and $L_g$.
A similar analysis shows the same results for conditions (1) and (2).  Putting these three facts together, the probability that a random graph fails any one of these properties is no more than $3/n^{c}$. In particular, such a graph exists. 
\end{proof}

If we regard each node in $R$ as a committee whose members are the set of neighbor nodes contained in $L_g \cup L_B$, and if $|L_g| > (1+\epsilon)(2/3) s$ and $|L_b| \leq (1-\epsilon) s/3 $, the nodes in $L_g$ which are mapped to the same committee can successfully run a linear time deterministic Byzantine agreement algorithm by Dolev et al.~\cite{Dolev82},  even if they do not know the exact number and names of all the bad nodes participating, since the total number of bad nodes in any good node's view mapped to the same committee is less than one third of the good nodes mapped to the committee.   Each good node may fail to send a message to some bad node, or may fail to hear from some bad node, but a Byzantine agreement algorithm is resilient to a bad node which fails to send a message. Thus, the algorithm from~\cite{king2006scalable} will work correctly with high probability.  Moreover, this \lcba algorithm has latency that is polylogarithmic in the number of participants (active ndoes), and requires each active node to send only a polylogarithmic number of bits. 

Note that the algorithm from~\cite{king2006scalable} is almost everywhere, i.e., all but $1/\log n$ fraction of good nodes come to agreement.
If we follow the last step with a step in which all active nodes send to each other, each can take the majority, so that all active good nodes come to the same correct decision, for a total number of messages  $\leq (1 + \epsilon)2^{2i} \log^2 n$ messages in phase $i$.

\subsection{\promAlg} \label{s:prom}
We now present a simple algorithm to solve the \prom problem, defined in Section \ref{s:promProblem}.

\medskip
\noindent
 \promAlg

\begin{enumerate}

\item  Each node $y$ sends a request to  a random set of $c \log n$ nodes. 

\item Each node $x$, upon receiving a request from a node $y$, responds to the request by reporting $(\rout_x, \vall_x)$.

\item If greater than a $t/n +\epsilon$ fraction of nodes sampled by $x$ respond with $\rout=1$, then $x$ sets $\rout \leftarrow 1$ and sets $\vall_x$ to the majority of the \vall bits sent by sampled nodes. Else $\rout_x \leftarrow 0$.
\end{enumerate}

\begin{lemma} \label{l:prom}
\promAlg solves the \prom problem (Definition~\ref{d:prom-correctness}), with $O(1)$ latency, and sending $O(T' + n \log n)$ bits, where $T'$ is the minimum of $n^2$ and the number of messages sent by the adversary during this algorithm.
\end{lemma}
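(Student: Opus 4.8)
The plan is to verify the two properties in Definition~\ref{d:prom-correctness} separately, in each case reducing to a Chernoff bound on the number of sampled nodes that report $\rout = 1$, and then handling the value bit by a second Chernoff bound on the majority of \vall bits. Throughout, $c$ is taken large enough (as a function of the desired failure exponent) that all the high-probability statements below, after a union bound over the $\leq n$ nodes doing the sampling, hold \whp.

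\textbf{Property 2 (the easy direction).} Suppose all good nodes have $\rout = 0$. Fix a node $x$. The nodes it samples that respond with $\rout = 1$ are all bad, so the fraction of ``yes'' responses $x$ sees is at most the fraction of bad nodes among its $c \log n$ samples. Since the sample is uniform (each of the $c\log n$ picks independent), the expected number of bad nodes sampled is at most $(t/n) c \log n \le (\fbad) c\log n$, and by a Chernoff bound, \whp this is below $(t/n + \epsilon)c\log n$. Hence $x$ sets $\rout \leftarrow 0$. A union bound over all nodes $x$ finishes this case (here we need $c$ large enough relative to $\epsilon$ and the failure exponent, exactly as in Lemma~\ref{l:highprob}).

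\textbf{Property 1 (the main case).} Now suppose at least a $t/n + 2\epsilon$ fraction of good nodes hold $(\rout,\vall) = (1,v)$ and the rest of the good nodes hold $\rout = 0$. Fix a node $x$. Let $X$ count, among $x$'s $c \log n$ samples, those landing on a good node with the tuple $(1,v)$; then $E[X] \ge (t/n + 2\epsilon) c \log n$, so by a Chernoff bound \whp $X > (t/n + \epsilon) c \log n$, which already forces $x$ to set $\rout \leftarrow 1$. It remains to show $x$ sets $\vall_x = v$. Among $x$'s samples, the ones reporting $\rout = 1$ are either good nodes with tuple $(1,v)$ or bad nodes (good nodes with $\rout = 0$ never report $\rout = 1$). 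Let $Y$ be the number of good $(1,v)$-nodes sampled and $Z$ the number of bad nodes sampled; $x$'s \vall-majority is taken over these $Y + Z$ reports, of which at least $Y$ are equal to $v$. We have $E[Y] \ge (t/n + 2\epsilon)c\log n$ and $E[Z] \le (t/n)c\log n$, so \whp $Y \ge (1-\epsilon/2)(t/n+2\epsilon)c\log n > (1+\epsilon/2)(t/n)c\log n \ge Z$ (using $t/n < \fbad$ and $\epsilon$ a fixed constant, so the gap $2\epsilon$ dominates the deviations once $c$ is large enough); hence $Y > Z$ and the majority \vall bit is $v$. A union bound over all $x$ completes the proof of correctness.

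\textbf{Cost.} Each node sends exactly $c \log n$ request messages, so Step~1 contributes $O(n \log n)$ bits. In Steps~2--3, each good node responds once to each request it receives; the number of requests received from good nodes totals $O(n\log n)$, and the number received from bad nodes is at most the number of messages the adversary sends, which is at most $\min(n^2, \cdot) = T'$. Each response is a constant-size tuple of $O(\log n)$-bit values. Hence the total is $O(T' + n \log n)$ bits, and the latency is a constant number of rounds. The only mild subtlety worth a remark is that a single good node may be forced to answer $\Theta(n)$ requests (the adversary can concentrate its budget on one victim), but this affects only per-node load, not the total, which is what the lemma claims. I expect the value-bit sub-argument in Property~1 — showing $Y > Z$ \whp, i.e.\ that the honest $(1,v)$-nodes strictly outnumber the bad nodes in a random sample — to be the only place that needs care, and it goes through precisely because the hypothesis gives a constant-fraction ($2\epsilon$) slack above the $t/n$ threshold.
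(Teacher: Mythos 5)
Your proposal is correct and follows essentially the same route as the paper: Chernoff plus union bounds showing each node's sample contains more than a $t/n+\epsilon$ fraction of good nodes reporting $(1,v)$ and less than a $t/n+\epsilon$ fraction of bad nodes, which settles both properties of Definition~\ref{d:prom-correctness}, with the cost bounded by the $O(n\log n)$ requests from good nodes plus at most $T'$ responses forced by the adversary. Your explicit $Y>Z$ majority argument for the \vall bit and the remark about per-node (vs.\ total) load are just fuller versions of steps the paper leaves implicit.
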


\begin{proof}
Assume there are at least a $t/n+2\epsilon$ fraction of good nodes with $(\rout,\vall) = (1,v)$  for the same bit $v$, and all remaining good nodes have $\rout$ values of $0$.  By Chernoff and union bounds, every good node then has greater than a $t/n+\epsilon$ fraction of good nodes with $\rout$ values of $1$, and less than a $t/n +\epsilon$ fraction of bad nodes in their sample.  Hence, all good nodes will terminate with tuple values of $(\rout,\vall) = (1,v)$.

Assume that all good nodes have $\rout$ values of $0$.  Then by Chernoff and union bounds, each sample has less than a $t/n + \epsilon$ fraction of bad nodes.  Hence, all good nodes will terminate with $\rout$ values of $0$.

The number of bits sent is just the number of queries sent which is $O(T' + n \log n)$.
\end{proof}


\section{Lower Bounds for Resource-Competitive Byzantine Agreement}
\label{sec:lower-bound}

We now study the lower bounds for resource-competitive Byzantine agreement (BA). We first show a tight lower
bound on the resource competitiveness of deterministic BA protocols. Then we show a lower bound on
the resource competitiveness of randomized BA protocols.

\subsection{Deterministic Lower Bound}\label{sec:det-lb}
As per our model (cf. Section \ref{sec:model}) we assume a complete $n$-node network with $\eps n$ Byzantine nodes and $(1-\eps)n$ good nodes (i.e., non-Byzantine) for some small constant $\eps$. We assume the $KT_0$ model.
The Byzantine nodes are controlled by a non-adaptive rushing adversary.
It is assumed that Byzantine nodes cannot fake their own identities. 

In the above setting, the goal is to show a lower bound on the message bits spent by the good nodes in any deterministic algorithm solving Byzantine everywhere agreement. 
The lower bound also holds in the $KT_1$ model, in which a node knows the ID of its neighbors.

The output of the algorithm, i.e., the agreed value depends on the ID, input distribution of the nodes and the information exchanged among the nodes during the execution of the algorithm. More precisely, the output of a node $u$ (with id $ID_u$) is a function $f_u (ID_u,\, b_u, \, X_u) \rightarrow \{0, 1\}$, where the argument $b_u$ is the input bit of $u$ and $X_u$ is the set of received message bits during the execution of the algorithm. Let us call this information $(ID_u,\, b_u, \, X_u)$ as ``transcript" of $u$. The algorithm is deterministic and known to the adversary which controls the Byzantine nodes. Further, the algorithm should work for any input distribution (i.e., the $0-1$ value distribution). Given an input distribution over the nodes, the complete execution of the algorithm is known to the adversary. Based on the execution, the adversary selects Byzantine nodes (in the beginning) in such a way that the algorithm fails to achieve agreement everywhere unless it spends enough messages. In fact, we prove the following result.  

\begin{theorem}\label{thm:deterministic-lb}
Suppose the budget of messages of the Byzantine nodes is $T \leq cn^2$ bits, for some constant $c$. Then any deterministic algorithm, which solves Byzantine everywhere agreement, incurs an expected $\Omega(\min\{T, \, n^2\})$ bits of messages. 

\end{theorem}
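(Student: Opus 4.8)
The plan is to construct an adversary strategy that, against any deterministic everywhere-agreement algorithm, forces the good nodes to collectively transmit $\Omega(\min\{T, n^2\})$ bits, where $T$ is the Byzantine message budget. The core idea is an \emph{indistinguishability} argument: the adversary wants to create two executions that look identical to a large set of good nodes, but in which those good nodes ought to decide differently; since the algorithm is deterministic and must achieve agreement in both, the good nodes must do enough communication to rule out the ``bad'' execution, and the adversary can always arrange that ``enough'' is at least $\Omega(T)$ bits.

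Concretely, I would first fix the algorithm and consider the execution in which all good nodes start with input $0$; by validity and agreement, all good nodes output $0$. Now partition the good nodes into two groups, a ``target'' set $U$ of size $\Theta(n)$ whose members we will try to fool, and the rest. The adversary picks the $\eps n$ Byzantine nodes and has them \emph{simulate} an execution in which a significant fraction of nodes began with input $1$ — feeding the nodes in $U$ exactly the messages those nodes would see in such an alternate execution. Because communication is $\mathrm{KT}_0$ and a node only learns of a neighbor by hearing from it, a node $u \in U$ cannot distinguish "a genuine good neighbor that stays silent" from "a node that does not exist / is Byzantine and stays silent''; the only way $u$ can detect the deception is to actively reach out (select ports / send to nodes it has not heard from) and gather corroborating information from enough genuinely-good nodes outside the fooled set. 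The counting heart of the argument: if the total number of bits sent by \emph{good} nodes is $o(T)$, then on average a node in $U$ sends/receives few bits, so there is a large sub-collection $U' \subseteq U$ of good nodes each of which has communicated with only a small number of distinct other nodes; the adversary, having budget $T$, can afford to intercept/fabricate along all those few channels (this is where $T$ bits of Byzantine messages get spent — roughly one fabricated stream per communication channel incident to $U'$), making the two global executions indistinguishable to every node in $U'$. Then nodes in $U'$ decide $0$ in one execution and are required to decide $1$ in the other, contradicting everywhere-agreement. Choosing parameters so that $|U'|\ge 1$ and the per-node channel bound times $|U|$ is $\Theta(T)$ yields the $\Omega(\min\{T,n^2\})$ bound; the $n^2$ cap appears because once $T \ge \Theta(n^2)$ the adversary can simulate all-to-all and the good nodes gain nothing by communicating more than all-to-all.

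To make this an \emph{expected} lower bound and to handle arbitrary input distributions (as the theorem statement requires), I would apply a Yao-style averaging: put a distribution on inputs (e.g., each good node i.i.d.\ $0$ or $1$, or the two extreme all-$0$/all-$1$ inputs with equal weight) and on the choice of which execution is ``real,'' and argue that for a deterministic algorithm the expected good-node communication under this distribution must be $\Omega(\min\{T,n^2\})$, since otherwise by Markov there is a positive-probability event in which the fooling construction above goes through. I would also note, as the theorem claims, that nothing in this argument uses $\mathrm{KT}_0$ in an essential way — the adversary's silent/fabricated-channel simulation works verbatim in $\mathrm{KT}_1$, because knowing a neighbor's ID does not let a good node force that neighbor to speak, nor detect a consistent fabricated transcript without independent corroboration — so the bound holds in $\mathrm{KT}_1$ too.

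The main obstacle, and the step I expect to require the most care, is the \emph{simulation/indistinguishability construction}: precisely defining the alternate execution, showing that the Byzantine nodes (with only $T$ bits and without faking their own IDs) can in fact reproduce the exact view of every node in the fooled set $U'$ — including views built up adaptively over many rounds, since later messages a good node sends depend on earlier messages it received — and verifying that the fabricated transcripts are globally consistent (no good node outside $U'$ inadvertently exposes the deception by a stray message to a node in $U'$). Getting the bookkeeping right so that the number of fabricated message-bits is provably $O(T)$ while $|U'|$ and the induced lower bound on good-node communication is $\Omega(T)$ is the delicate part; everything else (Chernoff/averaging, the $n^2$ cap, extension to $\mathrm{KT}_1$) is routine once the simulation is pinned down.
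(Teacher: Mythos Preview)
Your high-level plan (indistinguishability: find low-communication good nodes and have the adversary control their views) is the right instinct, but you are making the argument far harder than it needs to be and, as you yourself flag, the ``simulation/fabrication'' step is where it breaks. Recall two model constraints: the adversary is static (chooses Byzantine nodes before execution) and Byzantine nodes cannot fake their own identities. Your proposal has Byzantine nodes ``intercept/fabricate along channels'' so as to reproduce to each $u\in U'$ the view it would have in an alternate execution; but intercepting messages between good nodes is not allowed, and fabricating a good node's message requires faking that node's identity. So the obstacle you single out is not just delicate bookkeeping --- under the stated model it is not clear your simulation can be carried out at all.

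The paper's proof sidesteps this entirely by exploiting determinism much more directly, and by fooling a \emph{single} node rather than a set. Since the algorithm is deterministic, for any fixed input the adversary can simulate the whole run in advance and see the exact communication graph. If good nodes send $o(T)$ bits total, some good node $u$ exchanges at most $\delta T/((1-\eps)n)$ bits, hence talks to at most that many distinct nodes $S_u$. Because $T\le cn^2$, one has $|S_u|\le \eps n$, so the (static) adversary simply \emph{declares all of $S_u$ Byzantine}. Now every message $u$ ever sees comes from a Byzantine node with its true identity --- no fabrication, no interception, no consistency issues --- and the adversary can choose $u$'s entire transcript. One then checks on four concrete inputs ($\mathcal I_1$: all $1$; $\mathcal I_2$: all $0$; $\mathcal I_3$: $u$ gets $1$, rest $0$; $\mathcal I_4$: $u$ gets $0$, rest $1$) that the adversary can force $u$ to disagree with the other good nodes, contradicting everywhere agreement. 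No Yao averaging is needed: determinism already gives a worst-case input. The $KT_1$ extension is immediate because knowing IDs does not help $u$ detect that all of $S_u$ are Byzantine.

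In short: drop the large target set $U$, the alternate-execution simulation, and the Yao step. The missing idea is that determinism lets the adversary precompute $S_u$ and place the Byzantine nodes \emph{exactly there}, which turns your hardest step into a triviality.
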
  
\begin{proof}
Let there be a deterministic algorithm $\A$ that solves the Byzantine agreement everywhere and incurs only $o(T)$ messages. We show a contradiction that the agreement is wrong in the sense that there exists two nodes with two different output value for some input distribution. Consider an arbitrary input distribution $\I$ over $n$ nodes. Since the total messages send by the good nodes is $o(T)$, there must exist a node, say, $u$ that exchanges (sends and receives) less than $\delta T/((1-\eps) n)$ message bits in total for some small constant $\delta < 1$ (the actual value of $\delta$ to be fixed later); otherwise the sum of the messages of all the good nodes would be $\Omega(T)$ (in fact, $u$ is the node which spends minimum number of messages throughout the execution of $\A$). Let $S_u$ be the set of nodes which exchange messages with $u$ throughout the execution of $\A$ on the given input $\I$. Note that, given $\A$ and $\I$, $u$ and $S_u$ are fixed and known to the adversary in the beginning. Further, for different input $\I$, the pair ($u$, $S_u$) might be different. The adversary then selects all the nodes in $S_u$ as Byzantine nodes before the execution starts. Thus the {\em transcript} of $u$ is fully controlled by the Byzantine nodes as $X_u$ is determined by the nodes in $S_u$. The transcript of $u$
is the total history of messages between $u$ and the rest of the nodes. Clearly, the decision of $u$ depends on the choice of $u$'s input  value (0 or 1), $u$'s ID and its transcript (which might also include the IDs of the nodes that it communicated with). Also, in a valid protocol, every node (with every distinct ID and input value) will
have a distinct transcript for deciding 0 or 1, respectively.
Essentially, the adversary can decide a transcript for $u$ (depending on its input value and ID) such that the output value of $u$ would be different than the output value of all other good nodes (assuming all other good nodes execute the algorithm without any influence from the Byzantine nodes). This will give a contradiction to the everywhere agreement.     

We now show that there are enough Byzantine nodes and each Byzantine node has sufficient budget to select all the nodes in $S_u$ as the Byzantine nodes (in the beginning). As explained above, the size of $S_u$ could be at most  $\delta T/((1-\eps) n)$. Clearly $S_u \neq \emptyset$ (otherwise, it won't be a valid protocol). In other words, $1 \leq |S_u| \leq \lceil \delta T/((1-\eps) n) \rceil$. Now consider the following cases.  

\noindent {\bf Case~1:} $|S_u| = 1$. That is $u$ exchanges at most $\delta T/((1-\eps) n)$ messages with a single node, say $w$, throughout the execution of $\A$.  If $w$ is selected to be a Byzantine node, 
it must have a budget of at least $\delta T/((1-\eps) n)$ bits to be sent to $u$. Since the number of Byzantine nodes is $\eps n$ and their total budget is $T$, each Byzantine node can send (up to) $T/\eps n$ messages on average, i.e., the average budget of each Byzantine node can be (up to) $T/\eps n$.  We need, 
$$ \frac{\delta T}{(1-\eps) n} < \frac{T}{\eps n} $$ 
which is satisfied for $\delta < (1-\eps)/\eps$. Note that the remaining Byzantine nodes (i.e., which are not in $S_u$) may behave as the good nodes and response to the good nodes by following the algorithm $\A$. They are chosen from the nodes who spend lesser messages than the other nodes throughout the execution of $\A$. Thus the remaining Byzantine nodes is not spending more messages than the good nodes.      

\noindent {\bf Case~2:} $|S_u| = \lceil \delta T/((1-\eps) n) \rceil$. Then $\eps n$ Byzantine nodes are sufficient to select all the nodes in $S_u$ in the beginning since (by assumption)  $T\leq cn^2$ where the constant $c = (1-\eps) \eps/\delta$ is determined as:
\begin{align*}
& \delta T/((1-\eps) n) \leq \eps n \\
\Rightarrow \,\, & T \leq ((1-\eps) \eps/\delta)n^2
\end{align*}
Further, all the Byzantine nodes in $S_u$ have enough budget to communicate with $u$.  

\noindent {\bf Case~3:} $1 < |S_u| < \lceil \delta T/((1-\eps) n) \rceil$. This  follows from Case~1 and Case~2. Case~1 says that a single Byzantine node has sufficient budget to communicate with $u$. Thus all the nodes in $S_u$ have enough budget to communicate with $u$. Also Case~2 says there are sufficient number of Byzantine nodes to select all the nodes in $S_u$ as Byzantine nodes.     

Thus we claim that, given any input distribution there is a good node $u$ whose output (i.e., agreed value) is completely controlled by the adversary as the adversary controls the transcript of $u$. Now we argue that the algorithm $\A$ fails to achieve agreement everywhere for at least one of the following four input distributions:- $\I_1:$ all the nodes get value $1$, $\I_2:$ all the nodes get value $0$, $\I_3:$ $u$ gets $1$ and rest get $0$, and $\I_4:$ $u$ gets $0$ and rest get $1$. Suppose the output of the algorithm $\A$ is constant, say all the nodes always output $1$, then the agreement is invalid for the input distribution $\I_2$. (Similarly invalid for $\I_1$ if output is always $0$). If the output  is non-constant, i.e., the output depends on the input, IDs and the execution, then we show that there exists two good nodes (one is $u$ and any one from the remaining good nodes) that agree on two different bits for one of the input distributions $\I_3$ and $\I_4$. Based on the input value of $u$  (which is $1$ for $\I_3$, and $0$ for $\I_4$) and its ID, the adversary decides the {\em transcript} of $u$ in such a way that $u$'s output bit will be opposite to the rest of the good nodes' output bit. This contradicts that $\A$ solves the Byzantine agreement everywhere. Thus the number of message bits sent by the good nodes is $\Omega(T)$.     
\end{proof}

The above argument holds even if the IDs are known to the neighbors (i.e., $KT_1$ model). The good nodes, in particular the node $u$, can never determine the Byzantine nodes, since the adversary selects the Byzantine nodes in the beginning and $u$ can have interactions with the Byzantine nodes only.

\subsection{Randomized Lower Bound}\label{sec:rand-lb}   
Let us first consider a complete network with $n$ nodes $V$ in the anonymous setting, i.e., nodes do not have any identifiers. This can be extended to the non-anonymous $KT_0$ setting, where nodes have unique identities. Each node $v$ has $n-1$ ports through which it connects to the $n-1$ other nodes. Thus, if a node sends a message through a port $p \in [n-1]$ to another node $v$, then any message $u$ receives through $p$ is guaranteed to be from $v$. As before,  among the $n$ nodes, a small fraction $\epsilon n$ (assumed to be integral) for a fixed $\epsilon > 0$ are Byzantine and denoted $V^b$; let $V^g = V \setminus V^b$.  Nodes can individually generate uniform and independent random bits as needed, but we do not assume the availability of common coins. 	

Our goal is to   show a lower bound on the message complexity for Byzantine Agreement  in the above setting (everywhere and with success probability 1)  assuming that the adversary has full information and is a rushing adversary. Let us recall the definition of the message complexity. 
\begin{dfn}
For a given BA algorithm $\A$, the message complexity  $M_{\A}$ (or just $M$ when clear from context) is defined as the maximum expected number of the sum of the bits sent by good nodes. The maximum is taken over all possible adversarial strategies (i.e., choice of IDs, port assignments,  input bits, and the behaviour of the Byzantine nodes) and the expectation is over the random bits used by the nodes. 
\end{dfn}

\paragraph{Overview of our approach.} We show that   if bad nodes  can send $\Omega(n^{1 + \alpha})$ messages, then the good nodes must send at least $\Omega(n^{1+\alpha/2})$ messages, for any $\alpha \in (0,1]$. If we assume not (for the sake of contradiction), then, good nodes can  reach agreement while only sending $o(n^{\alpha/2})$ messages on average. Under this situation, when any good node $u$ sends a message to any other good  node $v$, the bad nodes can bombard $v$ with $n^{\alpha/2}$ messages intended for denial of service (DoS). Node $v$ will be unable to distinguish between the legitimate message from $u$ and these DoS messages from bad nodes. As a result, $v$ will have to respond, on average, to $\Omega(n^{\alpha/2})$ messages from bad nodes first. This is more number of messages than what a good node $v$ can afford on average. Thus, several good nodes will not be able to establish two-way contact with any other good node, which we then exploit to show the impossibility via an indistinguishability argument.

\begin{theorem}\label{thm:rand-lb}
Consider any  BA algorithm $\A$ that guarantees that good nodes reach a valid agreement in the anonymous $KT_0$ setting as long as the number of messages sent by Byzantine nodes is at most $B = n^{1+\alpha}$ for some $\alpha \in (0,1]$. Then, the message complexity $M_{\A}$, i.e., the expected number of messages sent by good nodes, is at least $\Omega(n^{1+\frac{\alpha}{2}})$.
\end{theorem}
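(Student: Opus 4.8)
The plan is to argue by contradiction, following the overview: suppose $M_{\A} = o(n^{1+\alpha/2})$, so that good nodes send $o(n^{\alpha/2})$ messages on average. First I would fix a convenient adversarial configuration: the adversary will attempt a denial-of-service attack in which, whenever any good node $u$ sends a message to a good node $v$ through some port, the bad nodes send roughly $\Theta(n^{\alpha/2})$ spurious messages to $v$ through bad-controlled ports (and can also simulate plausible "protocol traffic", since the adversary is full-information and rushing). The key point is a budget accounting: the total number of messages bad nodes must spend to mask \emph{every} good-to-good message is at most (number of good-to-good messages sent) $\times \Theta(n^{\alpha/2})$, which in expectation is $M_{\A} \cdot \Theta(n^{\alpha/2}) = o(n^{1+\alpha})$, so it stays within the Byzantine budget $B = n^{1+\alpha}$. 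I would make this rigorous by conditioning on the typical event that the good nodes collectively send at most, say, $2M_{\A}$ messages (Markov), and noting the adversary can adapt online because it is rushing.

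Next I would show that under this attack, many good nodes fail to establish \emph{two-way} contact with any other good node. The intuition: a good node $v$ that receives a message through a port it did not initiate sees $\Theta(n^{\alpha/2})$ incoming messages on bad ports for each genuine incoming message; to respond to a genuine sender it would have to reply on a $\Theta(1/n^{\alpha/2})$-fraction of the ports it answers, but on average it can only afford $o(n^{\alpha/2})$ replies total, so the expected number of genuine senders it successfully replies to is $o(1)$. Formalizing this, let a good node be \emph{isolated} if it never both sends to and receives a reply from (or receives from and successfully replies to) another good node during the execution. A counting/averaging argument over good nodes, using linearity of expectation on the total message budget $M_{\A}$, shows that a constant fraction of good nodes are isolated with constant probability — more carefully, I would show the expected number of good nodes that achieve two-way good contact is $O(M_{\A}/n^{\alpha/2}) = o(n)$, so $(1-o(1))n$ good nodes are isolated in expectation, hence $\Omega(n)$ of them are isolated with probability bounded away from $0$ by Markov's inequality applied to the complement count.

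Finally I would run the standard indistinguishability argument. Take an isolated good node $w$. Its entire transcript consists only of messages exchanged with bad nodes (and messages it sends out that are never reciprocated, which carry no information back). The adversary can therefore present $w$ with a transcript identical to one it would see in an execution where all other good nodes have input $0$, and also identical to one where they all have input $1$ (the adversary uses its two "virtual" consistent worlds, as in the deterministic lower bound's $\I_3,\I_4$ style reasoning). Since $w$'s output is a (randomized) function of its own input, its ID/ports, and this transcript, $w$ cannot distinguish the two worlds, so it outputs the wrong bit with constant probability in at least one of them, violating everywhere agreement (or validity). Because $\Omega(n)$ good nodes are isolated with constant probability, the algorithm fails with constant probability overall, contradicting correctness (which must hold with probability $1$, or at least high probability). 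This gives $M_{\A} = \Omega(n^{1+\alpha/2})$.

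The main obstacle I anticipate is the second step: carefully defining "isolated" so that (i) the DoS budget bookkeeping genuinely stays under $B = n^{1+\alpha}$ even though the adversary must react online and may need to mask messages on ports whose endpoints it does not know in advance, and (ii) the averaging argument correctly handles the asymmetry between a node initiating contact versus receiving it, and the possibility that a good node could be "lucky" and have its genuine message arrive before any DoS flood. The rushing, full-information adversary and the $KT_0$ port structure are exactly what make (i) and (ii) work, but stating the isolation event and the expectation bound $O(M_{\A}/n^{\alpha/2})$ precisely — in particular choosing the flood size $\Theta(n^{\alpha/2})$ to simultaneously exhaust a receiver's reply budget and respect the sender-side budget — is the delicate part of the proof.
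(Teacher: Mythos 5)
Your overall architecture (DoS flooding with $n^{\alpha/2}$ replicas per good message, budget bookkeeping $2M_\A\cdot n^{\alpha/2}=o(n^{1+\alpha})$ via Markov, showing most good nodes never achieve two-way good contact, then an indistinguishability argument) matches the paper's proof of Theorem~\ref{thm:rand-lb}. The first two stages are essentially the paper's Lemma~\ref{lem:suppress}; the paper makes your ``each reply hits the genuine sender with probability $\Theta(1/n^{\alpha/2})$'' heuristic rigorous against adaptive response strategies by reducing to a ``good apples'' problem (Claim~\ref{clm:gap}), which is precisely the delicate point you flagged, so that part is a fixable sketch rather than a wrong turn.

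The genuine gap is in your final step. You assert that an isolated node $w$'s ``entire transcript consists only of messages exchanged with bad nodes,'' but that is false under your (and the paper's) definition of isolation: $w$ can still \emph{receive} one-way messages from good nodes, and those messages do carry information about the other good nodes' inputs, so the adversary cannot simply present $w$ with a transcript consistent with both the all-$0$ and all-$1$ worlds. Closing this hole requires the machinery you omitted: (i) a per-node bound showing each good node hears from only $o(n^{\alpha/2})$ \emph{distinct} good senders (the paper's Lemma~\ref{lem:no-overload}, also needed to keep the DoS sets $D_{u\to v}$ pairwise disjoint); (ii) three input scenarios (all-$0$, all-$1$, half-half) and the observation that in the pure scenarios the algorithm must let $v$ decide correctly against \emph{every} adversarial strategy, so the (coordinator-simulated) behaviour $\A(v,b)$ of $v$'s one-way good senders is well defined; and (iii) in the mixed scenario, the reserve half of the Byzantine nodes impersonates $\A(v_0,0)$ toward $v_0$ and $\A(v_1,1)$ toward $v_1$ (possible in the anonymous setting and within budget because $|I(v)|=o(n^{\alpha/2})$), making their views indistinguishable from the pure scenarios and forcing contradictory decisions. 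Without this simulation/overlay argument your indistinguishability claim does not go through, since a single incoming broadcast-style message from a genuine good node already lets $w$ separate your two ``virtual worlds.''
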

\begin{proof}
Suppose for the sake of contradiction that there is a BA algorithm $\A$ for which $M_\A \in o(n^{1+\frac{\alpha}{2}})$ despite an adversarial budget of $B$.   A quick upshot is that the average number of messages sent by  good nodes is $o(n^{\frac{\alpha}{2}})$. 

Our argument to show contradiction is structured as follows. We condition our entire argument on the total number of messages sent by good nodes to be bounded by at most $2 M_\A \in o(n^{1+\frac{\alpha}{2}})$, which occurs  with probability at least 1/2 by Markov's Inequality.  
We first observe in Lemma~\ref{lem:no-overload} that  good nodes receive messages from at most $o(n^{\alpha/2})$ different other good nodes. Exploiting this, we then give three adversarial scenarios. In all these scenarios, the Byzantine adversary employs a strategy in which half the Byzantine nodes use a denial-of-service style attack to suppress responses from good nodes. Under this adversarial strategy, we show in Lemma~\ref{lem:suppress} that most good nodes are unable to establish two-way communication with any other good node. The remaining half of the Byzantine nodes exploit this situation to make it difficult for good nodes to distinguish between different scenarios that require different agreement values.

We begin with the following lemma that holds due to an elementary counting argument.
\begin{lemma} \label{lem:no-overload}
Recall the condition that at most $2 M_\A \in o(n^{1+\frac{\alpha}{2}})$ messages are sent by good nodes. Under this condition, with high probability, each good node receives messages from at most $O(M_\A/n + \log n)$ different good nodes. 
\end{lemma}
\begin{proof}
Fix a good node $u$. Let $X_v$ be an indicator random variable taking the value 1 if good node $v$ sent a message to $u$. Notice that $E[X_v] = n_v/n$, where $n_v$ is the number of different nodes to which $v$ sent messages. The number of different good nodes from which $u$ received messages is given by $X = \sum_{v \in V^g} X_v$ and these $X_v$'s are independent owing to the fact that the ports at each node are independently permuted randomly. By linearity of expectation, \[E[X] = \sum_v E[X_v] = (1/n) \sum_v n_v \le 2M_\A/n \in   o(n^{\alpha/2}).\] Thus, we can apply Chernoff bounds to show that $\Pr(X \ge 12 M_\A/n) \le 2^{-2M_\A/n}$.  If $M_\A \ge n\log n$, $\Pr(X \ge 12 M_\A/n) \le 1/n^2$ and applying the union bound over all $u \in V^g$, we get the required result. On the other hand, if $M_\A < n\log n$, then, $E[X_v] <\frac{\log n}{n}$ and $E[X] < \log n$, but we can still apply Chernoff bounds and get \[\Pr(X \ge 12E[X]) \le \Pr(X \ge 12\log n) \le 2^{-2\log n} \le 1/n^2.\] Again, we get the desired bound when we apply the union bound over all $u$. The claim follows when we combine the two Chernoff bounds.
\end{proof}

We now wish to show  a certain adversarial strategy under which, with non-zero probability, at least one node will violate agreement.  The Byzantine adversary's strategy is as follows. 
\begin{description}
\item[Port assignments.] The nodes are interconnected randomly in the sense that each node has port numbers 1 through $n-1$ through which it connects to the $n-1$ other nodes and the adversary connects a random permutation of the nodes to the $n-1$ ports. The permutations used for each of the nodes are (mutually) independent. Of course, the adversary will be aware of the port assignments.

\item[Input bits.] Recall that there are $(1-\epsilon) n$ honest nodes $V^g$. The Byzantine adversary chooses one of the following three different scenarios. 
\begin{description}
\item[Scenario 0:] All good nodes are assigned 0.
\item[Scenario 1:] All good nodes are assigned 1.
\item[Scenario 2:] Exactly half of them $V_0^g$  (chosen uniformly at random)  are assigned an input bit 0 and the rest of them $V_1^g$ are assigned 1. 
\end{description}

\item[Denial of Service (DoS) Attackers.] Out of the $\epsilon n$ Byzantine nodes, the adversary designates $\epsilon n/2$ nodes as {\em DoS attackers} (denoted $D$). 
These DoS attackers behave in a specific manner towards nodes in $V^g$.
We now describe this behavior of the DoS attackers for a fixed node $v \in V^g$.
Whenever a node $u \not\in D$  sends a message $m$ to any good node $v$, the Byzantine adversary can observe this immediately and rushes in with $n^{\frac{\alpha}{2}}$ {\em replicas of} the (same message) $m$ from $n^{\frac{\alpha}{2}}$ DoS attackers denoted $D_{u \to v}$; 
the same $D_{u \to v}$ is used every time $u$ sends a message to $v$. However, for any pair $u\not\in {D \cup \{v\}}$ and $u'\not\in {D \cup \{v\}}$, $u \ne u'$, the adversary seeks to ensure that $D_{u \to v} \cap D_{u' \to v} = \emptyset$.
The adversary can ensure this pairwise disjointness as long as the number of nodes that send messages to $v$ is at most $o(n^{\frac{\alpha}{2}})$. This condition  is in fact satisfied with sufficient probability (cf. Lemma~\ref{lem:no-overload}) as long as the Byzantine adversary ensures that the number of nodes in $V^b \setminus D$ that contact $v$ is at most $o(n^{\frac{\alpha}{2}})$. 
Thus, from $v$'s perspective, to establish two-way contact with some good node $u$, it must intuitively respond to $\Omega(n^\frac{\alpha}{2})$ messages (on expectation), which is asymptotically more than its average budget.  
\item[Byzantine nodes that are not DoS attackers.] The strategy of nodes in $V^b \setminus D$ will be discussed  later. We reiterate that the strategy must ensure that no good node is contacted by more than $o(n^{\frac{\alpha}{2}})$ nodes from $V^b \setminus D$.
\end{description}

A message $m$ from a node $v$ to a  node $u$ is called a {\em response} if it is the first message from $v$ to $u$ and  $u$ had sent a message to $v$ no later than $m$. Under this definition, note that two messages, one from $u$ to $v$ and the other from $v$ to $u$, sent in the same round will be considered responses to each other when no message transpired between them at any earlier point in time. This is however a rare event that we can safely ignore. Furthermore, a response from $v$ to $u$ is called a {\em good response} if both $v \in V^g$ and $u \in V^g$. 

\begin{lemma} \label{lem:suppress}
The total number of good responses sent by nodes in $V^g$ is $o(n)$ with probability at least $1 - o(1)$.
\end{lemma}
\begin{proof}
To aid in this proof, we first prove a claim about a simplified problem that we call the {\em good apples problem} (GAP). In this problem, we have   $\bar{c}$ containers, $\bar{c} \ge c$ but $\bar{c} \in \text{poly}(c)$, and each container has $k$ apples (where $k \in \Omega(c^\epsilon)$ for some $\epsilon > 0$), but exactly one apple is good in each container and the rest are bad. We are allowed to pick a total of $c A(k)$ apples, where $A(k) \in o(k)$ but $A(k) \in \Omega(k^\varepsilon)$ for some $\varepsilon > 0$. Each pick can be from an arbitrarily chosen bin, but the apple that is picked must be chosen uniformly at random (UAR) and without replacement from the apples remaining in the chosen container. The goal is to pick as many good apples as possible. As a preliminary observation, notice that it doesn't help to pick from a container after we have already picked the good apple from it. 
We now show that there is no strategy that guarantees picking $\Omega(c)$ good apples with any reasonable probability.
\begin{claim} \label{clm:gap}
The probability that the number of good apples picked exceeds $\alpha c$ for any fixed $\alpha > 0$ is $o(1)$ regardless of the strategy used.
\end{claim}
\begin{proof}[Proof Sketch]
We argue that the optimal strategy to solve GAP is to greedily pick from a single (arbitrarily chosen) container until we get the good apple in it and then moving to next container (chosen arbitrarily) and repeating. The intuition behind this strategy is that, once we have invested in a container $i$ with at least one pick (while all others are unpicked), a pick from $i$ is more likely to get the good apple than a pick from any other container.

Now under the optimal strategy, we can let $X_i$ be the indicator random variable that is 1 if the number of picks from container $i$ was at least $k/2$ given that the good apple in $i$ was picked. Clearly, $\Pr(X_i = 1) \ge 1/2$. Moreover, given the budget of $cA(k)$ on the number of picks allowed, $\sum_i X_i \le \frac{2cA(k)}{k}$.   Let $G$ denote the total number of good apples that were picked. Notice that $G$ is dominated by $G^*$, the negative binomial distribution where $p=1/2$ and we are required to see $\frac{2cA(k)}{k}$ successes. Clearly, the probability that $G^*$ exceeds $\frac{12cA(k)}{k}$ can be viewed as the probability that the binomial random variable $B(\frac{12cA(k)}{k}, 1/2)$ is no more than $\frac{2cA(k)}{k}$, which is $o(1)$ by Chernoff bound.
\end{proof}

We now return to proving that the number of responses is $o(n)$ with probability at least $1-o(1)$. We can model the problem as a GAP. Each time a node $u \in V \setminus D$ sends a message to $v \in V^g$, recall that nodes in $D_{u \to v}$ send the exact same message. These messages sent by $D_{u \to v} \cup \{u\}$ can be viewed as a container with exactly one good message (i.e., the one sent by $u$) that $v$ needs to respond to. Since the total budget of messages that good nodes can send is $o(n^{1+\alpha/2})$, the total number of such ``containers" can be at most $\bar{c} = o( n^{1+\alpha/2})$ and each container has $k = n^{\alpha/2}+1$ apples (corresponding to the DoS messages and the good message). Since the total message complexity $2 M_\A$ is at most $o(n^{1+ \alpha/2}) = n \times o(n^{\alpha/2})$, we  get the GAP instance by setting $c = n$ and $A(k) = o(n^{\alpha/2})$. From Claim~\ref{clm:gap}, we know that the number of good apples picked is $o(c)$ with probability at least $1-o(1)$, which translates to the claim that the number of good responses is $o(n)$ with probability at least $1-o(1)$.
\end{proof}

The upshot is that a randomly chosen node $v \in V^g$  will not send any good response with probability at least a constant both under Scenario 0 and Scenario 1. Thus, this node (with constant probability) will have only managed to establish one-way communication with other good nodes. Let $I(v) \in V^g$ denote the set of good nodes that sent messages to $v$. By Lemma~\ref{lem:no-overload}, $|I(v)| \in o(n^{\alpha/2})$. Since $\A$ guarantees BA everywhere, $v$ will be able to correctly decide on 0 (resp., 1) for Scenario 0 (resp., Scenario 1).  

Let $\A(v, b)$ denote the strategy used by the nodes in $I(v)$ under Scenario $b$ for $b\in \{0,1\}$. For our lower bounding purpose, we can assume that all decisions by $\A(v,b)$ are made by a single coordinator within $I(v)$. The coordinator executing $\A(v)$  decides the content and timing of the messages sent by nodes in $I(v)$ to $v$. We assume that the coordinator is aware of $I(v)$ from the beginning. We even assume that it is aware of all messages sent by $v$. We are justified in making these assumptions because they only strengthen the algorithm. This strategy $\A(v,b)$ always ensures that $v$ correctly decides on $b$ under Scenario $b$, $b \in \{0, 1\}$. In particular,  it is resilient to all counter strategies by the Byzantine adversary. 

Notice however that $\A(v,b)$ can be executed by $V^b \setminus D$ as well because these strategies only entail the Byzantine adversary playing the role of coordinator, choosing $o(n^{\alpha/2})$ Byzantine nodes and using them to send appropriate messages to $v$ at appropriate times. Of course, the Byzantine nodes are also aware of all messages sent by $v$.

Now, in order to complete the indistinguishability argument, we let the Byzantine adversary  pick two nodes $v_0$ and $v_1$ uniformly at random under Scenario 2 and interact with them in such a way that they decide contradicting values. We condition on $v_0$ and $v_1$ starting with input values 0 and 1, respectively, which is anyway guaranteed to occur with probability 1/4 under Scenario 2. Moreover, the two nodes will not send any good response with constant probability. The Byzantine adversary then uses nodes in $V^b \setminus D$ to execute $\A(v_0, 0)$ and $\A(v_1, 1)$ towards $v_0$ and $v_1$, respectively. Specifically, the Byzantine adversary picks the $o(n^{\alpha/2})$ nodes from $V^b \setminus D$ to execute $\A(v_0, 0)$ (resp., $\A(v_1, 1)$) and -- upon observing messages sent and received by $v_0$ and $v_1$ in each round -- coordinates the chosen nodes to rush in with responses just as $\A(v,0)$ (resp., $A(v,1)$) would towards $v$ under Scenario $0$ (resp., Scenario $1$). Since $v$ decided correctly under both Scenarios 0 and 1, the two nodes must decide on $0$ and $1$, respectively, because their executions in Scenario 2 is indistinguishable from their respective executions in in Scenarios 0 and 1. Thus, with some constant probability, they reach contradicting decision values under Scenario 2.
\end{proof}

\paragraph{Extension to non-anonymous setting}
We finally point out how to extend easily the above result to the non-anonymous setting, where honest nodes
have unique identifiers; however, Byzantine nodes can fake their identifiers.\footnote{We note that this extension does not apply to  setting where the Byzantine nodes are a bit weaker, i.e., they cannot fake their identities.}
This can be done by an easy reduction to the anonymous setting. Suppose in the non-anonymous setting there is
a protocol that violates the lower bound shown in the anonymous setting. 
At the beginning of the protocol in the anonymous setting, honest nodes choose a random ID between $[1,n^3]$; it is easy to see that they are all unique with high probability. They then execute the protocol of the non-anonymous setting.

\section{Conclusion} \label{s:conc}
We have described an efficient randomized resource-competitive algorithm to solve Byzantine agreement, Leader election and Committee election, in the $KT_0$, synchronous communication model, with a static and full-information adversary.
Our algorithm is efficient in the sense that message cost and latency grow slowly with the number of messages sent by the adversary.  In particular, our algorithm uses $O((T + n) \log n)$ bits of communication, and has latency $O(\polylog(n))$, where $T$ is the minimum of $n^2$ and the number of bits sent by the nodes controlled by the adversary. Further, it succeeds with high probability. We also show  lower bounds on resource-competitive Byzantine agreement algorithms. Our lower bounds show that in general, it is not possible to
do significantly better than our algorithm with respect to the number of bits sent by Byzantine nodes.
A key open problem is to   close the gap between upper and
lower bounds for randomized protocols across all budget values.

\bibliographystyle{plain}
\bibliography{security}

\end{document}